\renewcommand{\ALG@name}{Model}
\newcolumntype{K}[1]{>{\centering\arraybackslash}p{#1}}
\definecolor{myred}{RGB}{202,0,32}
\definecolor{myorange}{RGB}{244,165,130}
\definecolor{myviolet}{RGB}{194,165,207}
\definecolor{mycyan}{RGB}{146,197,222}
\definecolor{myblue}{RGB}{5,113,176}
\definecolor{mygreen}{RGB}{127,191,123}
\definecolor{mytile}{RGB}{27,120,55}
\newtheorem{lemma}{Lemma}[section]
\newtheorem{prop}[lemma]{Proposition}
\newtheorem{thm}[lemma]{Theorem}
\theoremstyle{remark}
\theoremstyle{remark}
\theoremstyle{definition}
\newcommand{\set}[1]{\left\{#1\right\}}
\newcommand{\abs}[1]{\left|#1\right|}
\DeclareMathOperator{\diag}{diag}
\newcommand{\R}{\mathbb R}
\newcommand{\calE}{\mathcal{E}}
\newcommand{\calG}{\mathcal{G}}
\newcommand{\calN}{\mathcal{N}}
\newcommand{\pp}{{\bm{p}}}
\newcommand{\ff}{{\bm{f}}}
\newcommand{\bt}{{\bm{\theta}}}
\newcommand{\ee}{{\bm{e}}}
\newcommand{\BB}{{\bm{B}}}
\newcommand{\CC}{{\bm{C}}}
\newcommand{\LL}{{\bm{L}}}
\title{\LARGE \bf
Interface Networks for Failure Localization in Power Systems}
\author{Chen Liang$^{1}$, Alessandro Zocca$^{2}$, Steven H. Low$^{1}$ and Adam Wierman$^{1}$
\thanks{$^{1}$Chen Liang, Steven H. Low, and Adam Wierman are with the Department of Computing and Mathematical Sciences, California Institute of Techonlogy, Pasadena, CA, USA. Email: {\tt\small \{cliang2, slow, adamw\}@caltech.edu}}%
\thanks{$^{2}$Alessandro Zocca is with the Department of Mathematics, Vrije Universiteit, Amsterdam, Netherlands. Email: {\tt\small a.zocca@vu.nl}}%
}
\begin{document}

\maketitle

\begin{abstract}
	Transmission power systems usually consist of interconnected sub-grids that are operated relatively independently. When a failure happens, it is desirable to localize its impact within the sub-grid where the failure occurs. This paper introduces three interface networks to connect sub-grids, achieving better failure localization  while maintaining robust network connectivity. The proposed interface networks are validated with numerical experiments on the IEEE 118-bus test network under both DC and AC power flow models.
\end{abstract}

\section{Introduction}\label{sec:intro}


An interconnected power system comprises sub-grids that are usually individually managed by independent system operators (ISOs). It is desirable to localize failure impact within the sub-grid where the failure happens while leaving other sub-grids unaffected. On the other hand, transmission line failures are known to propagate non-locally~\cite{hines2017cascading,guo2017critical}. Historical data shows that successive failures in large cascades can be far away from the preceding failure, both geometrically and topologically~\cite{kinney2005modeling}.

Considerable attention in recent years has been given to the task of understanding control policies and network structural properties that can localize the impact of failures.  Papers have focused on active control actions such as load shedding~\cite{faranda2007load,saffarian2010enhancement} and controlled islanding~\cite{ding2012two,esmaeilian2016prevention} to prevent large-scale blackouts. Researchers also investigate the  relation between  failure propagation and  topological structures of power networks~\cite{cetinay2016topological,kaiser2021topological}.
It is proven that non-cut failures are localized if sub-grids are connected in a tree structure and that, if sub-grids are connected by multiple lines, failures cannot be completely localized~\cite{guo2021line1}. This suggests switching off certain transmission lines in order to leave only one line between each pair of sub-grids~\cite{guo2018failure, zocca2021, lan2021a}.

However, maintaining a tree structure at the sub-grid level is at odds with the standard approach to reliability because it creates single points of failure. Further, a tree-connected power network significantly reduces the power transmission capacity between the sub-grids, increasing the cost of power dispatch. Instead, more traditionally, it is desirable to have multiple lines between sub-grids so as to ensure there is no single-point vulnerabilities and to increase the power transmission capacity.  

These contrasting views lead to an important open question: \emph{Is it possible to provably localize failures within sub-grids without creating a single point of failure? }


\textbf{Contributions.} We study three interface networks that connect the sub-grids in a way that achieves failure localization and robust connectivity between the sub-grids.

Specifically, we consider a power network consisting of two sub-grids and propose three alternative \textit{interface networks} to connect them, as shown in Fig.~\ref{fig_sim}. We use line outage distribution factors (LODF) as the metric to quantify failure localization and prove, in Theorems \ref{thm:series} and \ref{thm:parallel},
that the LODFs are guaranteed to decrease if the sub-grids are connected by a $2\times2$ series (Fig.~\ref{fig_second_case}) or parallel (Fig.~\ref{fig_third_case}) interface network. We further provide an upper bound for the LODF if the sub-grids are connected by a $2\times2$ complete bipartite network (Fig.~\ref{fig_forth_case}) in Theorem~\ref{thm:bipartite}. 
By carefully designing the line susceptances of the interface network, the complete bipartite interface network can completely eliminate  failure propagation to other sub-grids while keeping the impact on surviving lines in the same sub-grid unchanged. We validate the efficacy of the proposed interface networks on the IEEE 118-bus test network under both DC and AC power flow models. All three interface networks decrease the LODF for lines in different sub-grids, with the complete bipartite network achieving the best localization. 

\textbf{Related Work.} There have been extensive efforts toward understanding the localizability of failures in power systems to network topological structures. Most papers focus on summarizing empirical results.  For example, \cite{soltan2015analysis} observes the LODF decreases as the distance from the tripped line increases, and \cite{strake2019non} defines another distance metric that better captures such decay. There are only a few papers presenting analytical results on failure localization. A topological representation for the LODF is proposed in \cite{guo2017monotonicity} and the authors further prove that LODF is zero if the sub-grids are connected in a tree structure in \cite{guo2021line1}. The failure localization of tree partitioning has been proposed in~\cite{bialek2021tree} to replace controlled islanding as a defense mechanism to arrest cascading failure.
However, a power network with tree-connected sub-grids is less practical as it creates a single point of failure. In \cite{kaiser2021network}, authors propose to connect the sub-grids by a complete bipartite interface, the network isolator, to suppress the failure spreading. However, they require the adjacency matrix (weighted by line susceptance) of the interface network to be exactly rank-1 which can be difficult to satisfy in practice. To the best of our knowledge, this paper is the first to mathematically characterize the LODF with sub-grids connected by interface networks beyond the rank-1 setting.


\section{Power Redistribution}\label{sec:pre}
To begin, we introduce the linearized DC power flow model and then illustrate how power flows redistribute in the network after line failures. We further show how to decompose the power transfer distribution factor and derive their monotonicity property in terms of line susceptances. 

\subsection{DC Power Flow}

We model the power grid as a directed graph $\calG = (\calN, \calE)$ with a set $\calN=\set{1,2,\dots,n}$ of $n$ buses and a set $\calE=\set{e_1,\dots, e_m} \subseteq \calN \times \calN$ of $m$ transmission lines connecting the buses. An arbitrary direction is assigned to each transmission line, and $(i,j)$ represents the transmission line from bus $i$ to bus $j$. We assume the lines are purely reactive and characterized by their susceptances, which we collect in the susceptance matrix $\BB:=\diag{(b_1, \dots, b_m)}$.

Let $\pp, \bt\in \R^n$ denote the power injection and phase angle at each bus, and let $\ff\in\R^m$ denote the power flow along every transmission line. The widely used DC power flow equations~\cite{stott2009dc} can be written in the following matrix form:
\begin{subequations}\label{eqn:dc_model}
	\begin{IEEEeqnarray}{rCl}
		\pp&=&\CC\ff ,\label{eqn:dc_model.1}\\
		\ff&=&\BB\CC^T\bt, \label{eqn:dc_model.2}
	\end{IEEEeqnarray}
\end{subequations}
where $\CC\in\R^{n\times m}$ is the incidence matrix:
$$
C_{ie}=\begin{cases}
  1 & \text{if bus }i\text{ is the source of line } e,\\
  -1 & \text{if bus }i\text{ is the destination of line } e,\\
  0 &\text{otherwise.}
\end{cases}
$$
If the  network is connected, the power injections must be balanced, i.e., $\sum_{i\in\calN} p_i =0$. Defining the Laplacian matrix of the network as $\LL:=\CC\BB\CC^T$, we can uniquely determine the line flows in terms of the power injections:
\begin{equation}
    \ff = \BB \CC^T \LL^\dagger \pp,
\end{equation}
where $(\cdot)^\dagger$ denotes the Moore-Penrose inverse.

\subsection{Power Redistribution After Line Failures}
When a line failure occurs, the power will redistribute over the post-contingency network, and line flows can both increase or decrease, sometimes even reversing their directions. In the power systems literature two sensitivity factors, the power transfer (PTDF) and the line outage (LODF) distribution factors are commonly used to compute the post-contingency line flows~\cite{guler2007generalized,guo2009direct}. It should be noted that these sensitivity factors are independent of the power injections and transmission line capacities.

Specifically, the PTDF $D_{e,\hat i \hat j}$ is the relative flow change over line $e=(i,j)$ when a unit power is injected at bus $\hat i$ and withdrawn from bus $\hat j$. The LODF $K_{e,\hat e}$ is the relative flow change over line $e = (i,j)$ when line $\hat e=(\hat i, \hat j)$ is tripped. They are given by:
\begin{subequations}
    \begin{IEEEeqnarray}{rCl}
        D_{e,\hat i \hat j} & = & b_e (\ee_i - \ee_j)^T \LL^\dagger (\ee_{\hat i} - \ee_{\hat j}), \label{eqn:PTDF.1}\\
        K_{e,\hat e} & = &  \frac{b_e(\ee_i - \ee_j)^T \LL^\dagger (\ee_{\hat i} - \ee_{\hat j})}{1 - b_{\hat e} (\ee_{\hat i} - \ee_{\hat j})^T \LL^\dagger (\ee_{\hat i} - \ee_{\hat j})},\label{eqn:LODF}
    \end{IEEEeqnarray}
\end{subequations}
where $\{\ee_k\}_{k=1,\dots,n}$ is the standard vector basis. It is known that the PTDF and LODF can be related as follows~\cite{wood2013power}:
$$
K_{e,\hat e} = \frac{D_{e,\hat i \hat j}}{1 - D_{\hat e,\hat i \hat j}}.
$$
This expression suggests that the power redistribution after line failures can be emulated by introducing fictitious injections over the pre-contingency network~\cite{guler2007generalized}. In fact, the power redistribution can be analyzed over the post-contingency network as well. The following lemma, which relates the LODF for the pre-contingency network and the PTDF for the post-contingency network, provides an alternative perspective to study the impact of transmission line failures.

\begin{lemma} \label{lem:LODF_PTDF}
Consider a network $\calG=(\calN, \calE)$ and a non-bridge transmission line $\hat e$ failure\footnote{A non-bridge line is a transmission line whose deletion does not increase the network's number of connected components. Otherwise it is a bridge.}. Let $K_{e, \hat e}$ denote the LODF for the pre-contingency network $\calG$, and let $\tilde{D}_{e,\hat i \hat j}$ denote the PTDF for the post-contingency network $\tilde{\calG}=(\calN, \calE \setminus \hat e)$. We have
$
K_{e, \hat e} = \tilde{D}_{e,\hat i \hat j}.
$
\end{lemma}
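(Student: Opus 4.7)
The plan is to reduce the claim to an algebraic identity relating $\LL^\dagger$ and $\tilde\LL^\dagger$, where $\tilde\LL = \CC\BB\CC^T$ restricted to the post-contingency network $\tilde\calG$. Writing $v := \ee_{\hat i} - \ee_{\hat j}$ and $u := \ee_i - \ee_j$, I would first observe the rank-one identity
\begin{equation*}
\tilde\LL \;=\; \LL \;-\; b_{\hat e}\, v v^T,
\end{equation*}
which follows directly from the definition of the weighted Laplacian since removing line $\hat e$ simply subtracts its contribution $b_{\hat e}(\ee_{\hat i}-\ee_{\hat j})(\ee_{\hat i}-\ee_{\hat j})^T$ from $\CC\BB\CC^T$. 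Combined with the LODF expression \eqref{eqn:LODF} and the PTDF expression \eqref{eqn:PTDF.1} applied on $\tilde\calG$, the statement $K_{e,\hat e} = \tilde D_{e,\hat i\hat j}$ becomes the algebraic identity
\begin{equation*}
u^T \tilde\LL^\dagger v \;=\; \frac{u^T \LL^\dagger v}{1 - b_{\hat e}\, v^T \LL^\dagger v}.
\end{equation*}

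To establish this identity I would apply a Sherman--Morrison-type update for the Moore--Penrose pseudoinverse. The non-bridge assumption is used precisely here: because $\tilde\calG$ remains connected, both $\LL$ and $\tilde\LL$ are graph Laplacians of connected graphs, so they share the same kernel $\mathrm{span}(\mathbf{1})$ and the same range $\mathbf{1}^\perp$. Since $v = \ee_{\hat i} - \ee_{\hat j} \in \mathbf{1}^\perp$, I can restrict both operators to $\mathbf{1}^\perp$, where they are invertible and the standard Sherman--Morrison formula yields
\begin{equation*}
\tilde\LL^\dagger \;=\; \LL^\dagger \;+\; \frac{b_{\hat e}\, \LL^\dagger v v^T \LL^\dagger}{1 - b_{\hat e}\, v^T \LL^\dagger v}.
\end{equation*}
Contracting left and right with $u$ and $v$ respectively, and collecting the $u^T \LL^\dagger v$ factor, gives the desired identity after a one-line simplification, completing the proof.

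The main obstacle is justifying the Sherman--Morrison update at the level of pseudoinverses, since the standard formula is stated for invertible matrices. I would handle this cleanly by noting that the non-bridge hypothesis guarantees $\LL$ and $\tilde\LL$ act isomorphically on the invariant subspace $\mathbf{1}^\perp$, so the update reduces to the classical invertible case there, while both pseudoinverses vanish on $\mathrm{span}(\mathbf{1})$. A secondary (and minor) point is verifying that the denominator $1 - b_{\hat e} v^T \LL^\dagger v$ is nonzero; this is exactly the condition that the PTDF of $\hat e$ with respect to its own endpoints is not equal to $1$, which holds whenever $\hat e$ is a non-bridge (so the rank-one perturbation remains non-degenerate). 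With these observations the lemma follows from the algebraic manipulation outlined above.
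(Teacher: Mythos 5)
Your proof is correct, but it takes a genuinely different route from the paper's. The paper argues directly from the DC power flow equations: writing the pre- and post-contingency systems $\pp=\LL\bt$ and $\pp=\tilde{\LL}\tilde{\bt}$ and subtracting, it finds that the angle change satisfies $\tilde{\LL}(\tilde{\bt}-\bt)=\CC_{\hat e}f_{\hat e}$, i.e., $\tilde{\bt}-\bt$ is the response of the post-contingency network to an injection of $f_{\hat e}$ at $\hat i$ withdrawn at $\hat j$; dividing the resulting flow change on $e$ by $f_{\hat e}$ yields the PTDF immediately, with no explicit formula for $\tilde{\LL}^\dagger$ in terms of $\LL^\dagger$ ever needed. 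You instead reduce the claim to the algebraic identity $u^T\tilde{\LL}^\dagger v = u^T\LL^\dagger v/(1-b_{\hat e}v^T\LL^\dagger v)$ and prove it via a Sherman--Morrison update of the pseudoinverse restricted to $\mathbf{1}^\perp$. Both arguments are sound: your handling of the two delicate points (the shared kernel $\mathrm{span}(\mathbf{1})$ justifying the restriction, and the nonvanishing of $1-D_{\hat e,\hat i\hat j}$, both consequences of the non-bridge hypothesis) is right. The paper's derivation is shorter and makes the physical interpretation --- a fictitious injection on the post-contingency network --- transparent, which is the perspective the rest of the paper builds on. Your derivation makes the role of the non-bridge assumption more explicit and extends mechanically to simultaneous multi-line outages via the Woodbury identity, a generalization the paper only states in a remark.
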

\begin{proof}
Without loss of generality, decompose the matrices $\CC = [\CC_{\hat e}, \CC_{-\hat e}]$ and $\BB = \diag(b_{\hat e}, b_l), l \neq \hat e$ correponding to the tripped line $\hat e$ and the surviving lines $-\hat e:=\calE \setminus \hat e$. For clarity, we use $\tilde{(\cdot)}$ to denote all variables related to the post-contingency network. The DC power flow equations for the pre- and post-contingency networks thus rewrite as:
\begin{subequations}
	\begin{IEEEeqnarray}{rCl}
		\pp & = & \CC \BB \CC^T \bt = b_{\hat e} \CC_{\hat e} \CC_{\hat e}^T\bt +  \CC_{-\hat e} B_{-\hat e} \CC_{-\hat e}^T\bt, \label{eqn:DC_pre_post.1}\\
		\pp & = & \CC_{-\hat e} \BB_{-\hat e} \CC_{-\hat e}^T \tilde{\bt}.  \label{eqn:DC_pre_post.2}
	\end{IEEEeqnarray}
\end{subequations}
Subtracting \eqref{eqn:DC_pre_post.2} from \eqref{eqn:DC_pre_post.1}, we get
$$
\CC_{-\hat e} \BB_{-\hat e} \CC_{-\hat e}^T (\tilde{\bt} - \bt) = b_{\hat e} \CC_{\hat e} \CC_{\hat e}^T\bt = \CC_{\hat e} f_{\hat e}.
$$
By definition, the LODF can be computed as
$$
K_{e,\hat e} = \frac{\tilde{f}_e -f_e}{f_{\hat e}} = \frac{ b_e \CC_{e}^T (\tilde{\bt} - \bt)} {f_{\hat e}} = \frac{b_e\CC_e^T \tilde{L}^\dagger \CC_{\hat e} f_{\hat e}}{f_{\hat e}} = \tilde{D}_{e,\hat i \hat j}.
$$%
\end{proof}%
We remark that the above result holds also in the case of multiple line failures. Consider a set $\hat E$ of lines that are simultaneously disconnected and suppose the post-contingency network remains connected. The generalized LODF $K_{e,\hat e}^{\hat E}$ ~\cite{guler2007generalized} is defined as the sensitivity of relative flow change over the surviving line $e \in \calE \setminus \hat{E}$ with respect to a tripped line $\hat e = (\hat i, \hat j) \in \hat{E}$. It equals the PTDF for line $e$ with the pair of buses $\hat i, \hat j$ of the post-contingency network.

Lemma~\ref{lem:LODF_PTDF} suggests that the impact of a transmission line $\hat e=(\hat i, \hat j)$ failure is equivalent to the power flows when the pre-contingency flow $f_{\hat e}$ is injected at bus $\hat i$ and withdrawn from bus $\hat j$ over the post-contingency network. This post-contingency perspective allows us to convert the calculation of LODF into the calculation of PTDF, relating the failure impact directly to the network topology.

\subsection{Decomposition of the PTDF}
A power grid usually consists of several interconnected sub-grids and it is of interest to decompose the calculation of PTDF accordingly. In this section, we introduce such a decomposition for certain network structures. Specifically, suppose a connected network $\calG=(\calN,\calE)$ can be decomposed into two sub-grids: $\calG_1=(\calN_1, \calE_1)$ and $\calG_2=(\calN_2, \calE_2)$ such that:
\begin{itemize}
    \item The line sets do not overlap: $\calE_1 \cap \calE_2 = \emptyset$, $\calE_1 \cup \calE_2 = \calE$;
    \item The bus sets overlap with  only 2 buses: $\calN_1 \cap \calN_2 = \set{s, t}$, $\calN_1 \cup \calN_2 = \calN$.
\end{itemize}
Given a pair of buses $i, j$ of the network $\calG$ (not necessarily adjacent to each other), we define 
\emph{effective susceptance} between $i,j$ to be
\begin{equation}
    b_{ij}^{(e)} = \frac{1}{(\ee_{i} - \ee{j})^T \LL^\dagger (\ee_{i} - \ee_{j})}.
\end{equation}
The effective susceptance summarizes the network effect between a pair of buses by a single line ~\cite{ghosh2008minimizing}.

The following proposition demonstrates how $D_{e,\hat i \hat j}$, the PTDF for line $e$ and a pair of buses $\hat i,\hat j$ in different sub-grids, can be decomposed.
\begin{prop} \label{prop:PTDF_decomp}
Consider a network $\calG$ and its decomposition $\calG_1$, $\calG_2$. Let $\hat \calG_1 = (\calN_1, \calE_1 \cup (s,t))$ be a graph by adding a fictitious line
$(s,t)$ to the sub-grid $\calG_1$, with susceptance equaling the effective susceptance between buses $s,t$ of the sub-grid $\calG_2$:
$$b_{st}^{(e)}=\frac{1}{(\ee_s-\ee_t)^T \LL_2^\dagger (\ee_s-\ee_t)},$$
where $\LL_2$ is the Laplacian matrix of $\calG_2$.
For any pair of buses $\hat i,\hat j \in \calN_1$ and any line $e\in\calE_2$, the PTDF $D_{e,\hat i \hat j}$ can be computed as:
$$
D_{e,\hat i \hat j} = \hat D_{(s,t), \hat i \hat j} \cdot \bar{D}_{e, st},
$$
where $\hat D_{(s,t), \hat i \hat j}$ is the PTDF for the fictitious line $(s,t)$ and the pair of buses $\hat i, \hat j$ of $\hat \calG_1$, and $\bar{D}_{e, st}$ is the PTDF for line $e$ and the pair of buses $s,t$ of $\calG_2$.
\end{prop}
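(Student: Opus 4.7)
The plan is to interpret the identity as a Thevenin/Norton-type reduction: since sub-grids $\calG_1$ and $\calG_2$ share only the two interface buses $s$ and $t$, from the perspective of $\calG_1$ the sub-grid $\calG_2$ behaves as a single two-terminal element with equivalent susceptance $b_{st}^{(e)}$. I will therefore factor $D_{e,\hat i \hat j}$ through a scalar $\alpha$ that measures the net power flowing from $\calG_1$ into $\calG_2$ across the interface, and then identify $\alpha$ with $\hat D_{(s,t),\hat i \hat j}$.

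Concretely, let $\bt$ be the phase angles on $\calG$ corresponding to the injection $\ee_{\hat i} - \ee_{\hat j}$, and let $\bt_k := \bt|_{\calN_k}$ and $\ff_k := \ff|_{\calE_k}$. Because $\calE_1 \cap \calE_2 = \emptyset$ and no $\calE_2$-line touches $\calN_1 \setminus \{s,t\}$, restricting the power balance $\CC\ff = \pp$ to $\calN_2$ gives $\LL_2 \bt_2 = \pp^{(2)}$, where $\pp^{(2)}_v$ is the net outgoing flow at $v$ across $\calE_2$-lines. Using $\hat i, \hat j \in \calN_1$ one obtains $\pp^{(2)}_v = 0$ for every $v \in \calN_2 \setminus \{s,t\}$, and the identity $\mathbf{1}^T \CC_2 = 0$ then forces $\pp^{(2)} = \alpha(\ee_s - \ee_t)$ for some $\alpha \in \R$. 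By linearity of the DC model and the definition of the PTDF inside $\calG_2$, this yields $D_{e, \hat i \hat j} = \alpha \cdot \bar D_{e, st}$ for every $e \in \calE_2$.

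It remains to identify $\alpha$ with $\hat D_{(s,t),\hat i \hat j}$, which I would do by verifying that $\bt_1$ is a valid phase-angle vector for $\hat\calG_1$ under the injection $\ee_{\hat i} - \ee_{\hat j}$. The balance at any $v \in \calN_1 \setminus \{s,t\}$ is immediate because no $\calE_2$-line is incident to $v$ and $\hat\calG_1$ differs from $\calG_1$ only by the fictitious $(s,t)$-line. At $v = s$ the only new term is the fictitious-line flow $b_{st}^{(e)}(\theta_s - \theta_t)$; multiplying $\LL_2 \bt_2 = \alpha(\ee_s - \ee_t)$ on the left by $(\ee_s - \ee_t)^T \LL_2^\dagger$ gives $\theta_s - \theta_t = \alpha / b_{st}^{(e)}$, so this flow is exactly $\alpha$, which replaces the missing $\calE_2$-contribution in the original balance; the same holds at $v = t$ by symmetry. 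Hence the fictitious line carries flow $\alpha$ in $\hat\calG_1$, so $\hat D_{(s,t),\hat i \hat j} = \alpha$, and combining with the previous paragraph yields the claim. The main obstacle is the boundary bookkeeping at $s$ and $t$, specifically the pseudo-inverse identity that converts the definition of $b_{st}^{(e)}$ into the correct two-terminal boundary condition; interior buses and the rest of the argument follow from linearity and the disjointness of the edge sets.
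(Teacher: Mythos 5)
Your argument is correct and is essentially the paper's approach: the paper's proof is a two-sentence sketch that replaces $\calG_2$ by a two-terminal equivalent via Kron reduction, and your proposal carries out exactly that reduction from first principles (interior buses of $\calN_2$ have zero net injection, so the interface exchange is $\alpha(\ee_s-\ee_t)$, and the identity $\theta_s-\theta_t=\alpha/b_{st}^{(e)}$ shows the fictitious line reproduces the boundary flows in $\hat\calG_1$). The only detail worth adding is that $\hat\calG_1$ is connected (inherited from connectivity of $\calG$), so the flow on the fictitious line is well-defined and equals $\hat D_{(s,t),\hat i\hat j}$; otherwise your verification is complete and, if anything, more rigorous than the published sketch.
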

\begin{proof}[Proof (sketch)]
Since a DC power network is a linear network and the sub-grids are only joined by buses $s$ and $t$, the effect of $\calG_2$ can be equivalently represented as a fictitious transmission line $(s,t)$ with the effective susceptance between buses $s,t$ of the sub-grid $\calG_2$. Using Kron reduction~\cite{dorfler2013kron}, we can decompose the PTDF as above.
\end{proof}
We remark that this result is in fact a special case of the Kron reduction~\cite{dorfler2013kron} for linear networks.


\subsection{Monotonicity of the PTDF}


The next result describes the dependence of the PTDF for a line $e$ on its susceptance $b_e$ and network topology.
\begin{prop}\label{prop:mono_PTDF}
	Consider a connected network $\calG$. For any line $e$ and any pair of buses $\hat i,\hat j$, the absolute value of PTDF $D_{e,\hat i\hat j}$ can be expressed as:
	\begin{equation}
	|D_{e,\hat i\hat j}| = \frac{T_1 b_e}{T_2 b_e + T_3}, \label{eqn:PTDF}
	\end{equation}
	where $T_i \geq 0$ is a constant independent of the susceptance $b_e$ for $i=1,2,3$. Moreover, $T_3=0$ if and only if $e$ is a bridge of $\calG$. 
\end{prop}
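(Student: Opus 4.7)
The plan is to express $\LL$ as a rank-one update of the Laplacian of $\calG$ with edge $e$ removed and leverage the Sherman-Morrison identity on the Moore-Penrose pseudo-inverse. Let $u := \ee_i - \ee_j$ and $v := \ee_{\hat i} - \ee_{\hat j}$, and write $\LL = \LL' + b_e\, u u^T$, where $\LL'$ denotes the Laplacian of $\calG \setminus \{e\}$ (with the susceptance of $e$ stripped out). Note that $b_e$ appears in $\LL$ only through this rank-one term, so all the $b_e$-dependence of $D_{e,\hat i\hat j}$ comes from how $\LL^\dagger$ changes under this update.

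First I would treat the case when $e$ is not a bridge. Then $\calG \setminus \{e\}$ is still connected, so both $\ker \LL'$ and $\ker \LL$ equal $\mathrm{span}\{\mathbf{1}\}$. Since $u, v \perp \mathbf{1}$, restricting everything to $\mathbf{1}^\perp$ turns $\LL$ and $\LL'$ into honest invertible operators, and a standard Sherman-Morrison manipulation yields
\begin{equation*}
	u^T \LL^\dagger v \;=\; \frac{u^T (\LL')^\dagger v}{1 + b_e\, u^T (\LL')^\dagger u}.
\end{equation*}
Multiplying by $b_e$ and taking absolute values produces the claimed form with $T_1 = \abs{u^T (\LL')^\dagger v}$, $T_2 = u^T (\LL')^\dagger u$, and $T_3 = 1$. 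Nonnegativity of $T_2$ follows from positive semi-definiteness of $(\LL')^\dagger$, and since $T_3 = 1 \neq 0$ this already establishes one direction of the ``if and only if''.

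For the bridge case, Sherman-Morrison on $\mathbf{1}^\perp$ no longer applies directly, since removing $e$ makes $\ker \LL'$ two-dimensional. Instead, I would argue by conservation of flow: removing the bridge $e$ splits $\calG$ into two components $S_i \ni i$ and $S_j \ni j$, and any net power transfer between $S_i$ and $S_j$ must be carried entirely by $e$. Therefore, injecting one unit at $\hat i$ and withdrawing it at $\hat j$ forces a flow on $e$ equal to $0$, $+1$, or $-1$, depending on how $\hat i, \hat j$ are distributed between the two components. In every sub-case $\abs{D_{e,\hat i \hat j}}$ is independent of $b_e$, and can be written as $\tfrac{T_1 b_e}{T_2 b_e + T_3}$ with $T_3 = 0$ (picking either $T_1 = 0$ when the value is $0$, or $T_1 = T_2 > 0$ when the value is $1$), which completes the ``only if'' direction.

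The main technical obstacle is precisely this bridge case: the Sherman-Morrison identity in its cleanest form requires the rank-one update to preserve the kernel, which is exactly what fails when $e$ is a bridge. I would bridge this gap via the flow-conservation argument above; an alternative unifying route would be to appeal to the all-minors Matrix-Tree theorem to write both the numerator and the denominator of $D_{e,\hat i \hat j}$ as explicit linear polynomials in $b_e$ whose constant terms (sums over spanning subforests avoiding $e$) vanish precisely when $e$ lies in every spanning tree of $\calG$, i.e.\ when $e$ is a bridge.
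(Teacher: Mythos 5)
Your proposal is correct, but it takes a genuinely different route from the paper. The paper's proof is a two-line appeal to the spanning-tree (matrix-tree) representation of the PTDF from Theorem~4 of the cited reference: the denominator is the weighted count of all spanning trees, which is linear in $b_e$ with constant term $T_3$ equal to the weighted count of spanning trees avoiding $e$, and this vanishes precisely when $e$ is a bridge. You instead isolate the $b_e$-dependence via the rank-one update $\LL = \LL' + b_e uu^T$ and Sherman--Morrison on $\mathbf{1}^\perp$, obtaining explicit constants $T_1 = \abs{u^T(\LL')^\dagger v}$, $T_2 = u^T(\LL')^\dagger u$, $T_3 = 1$ in the non-bridge case, and you handle the bridge case separately by flow conservation across the cut. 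Both arguments are sound; yours is self-contained and makes the constants concrete (which the paper's citation-based proof does not), at the cost of a case split, while the combinatorial route treats both cases uniformly and identifies $T_3$ with a quantity whose vanishing is manifestly equivalent to $e$ being a bridge. One small caveat: since the $T_i$ in the statement are not unique, exhibiting a representation with $T_3=1$ does not by itself rule out another valid representation with $T_3=0$; to close the ``only if'' direction you should add the observation (which your construction makes immediate) that for a non-bridge $e$ with $T_1\neq 0$ the quantity $\abs{D_{e,\hat i\hat j}}$ genuinely varies with $b_e$, whereas any representation with $T_3=0$ is constant in $b_e$. The paper's proof is equally silent on this point, so this is a refinement rather than a gap. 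Your closing remark about the all-minors matrix-tree theorem is essentially the paper's actual argument.
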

\begin{proof}
	The PTDF can be computed as a quotient of different spanning trees of the network, as shown in Theorem 4 in~\cite{guo2021line1}. Specifically, the numerator involves a subset of spanning trees that must pass through line $e$. The denominator involves all spanning trees, including those that pass through line $e$, accounted for in the term $T_2b_e$, and those that do not, giving rise to the term $T_3$.
    Therefore, $T_3=0$ if and only if $e$ is a bridge.
\end{proof}
This proves that the absolute value of PTDF for a line is monotonically increasing in its susceptance if the line is not a bridge. This monotonicity result is aligned with the intuition that lines with larger admittances (thus smaller impedances) tend to ``attract'' more power to flow through.

\section{Failure Localization}\label{sec:localization}
\begin{figure}[!t]
\centering
\subfloat[Original network]{\includegraphics[width=0.43\columnwidth]{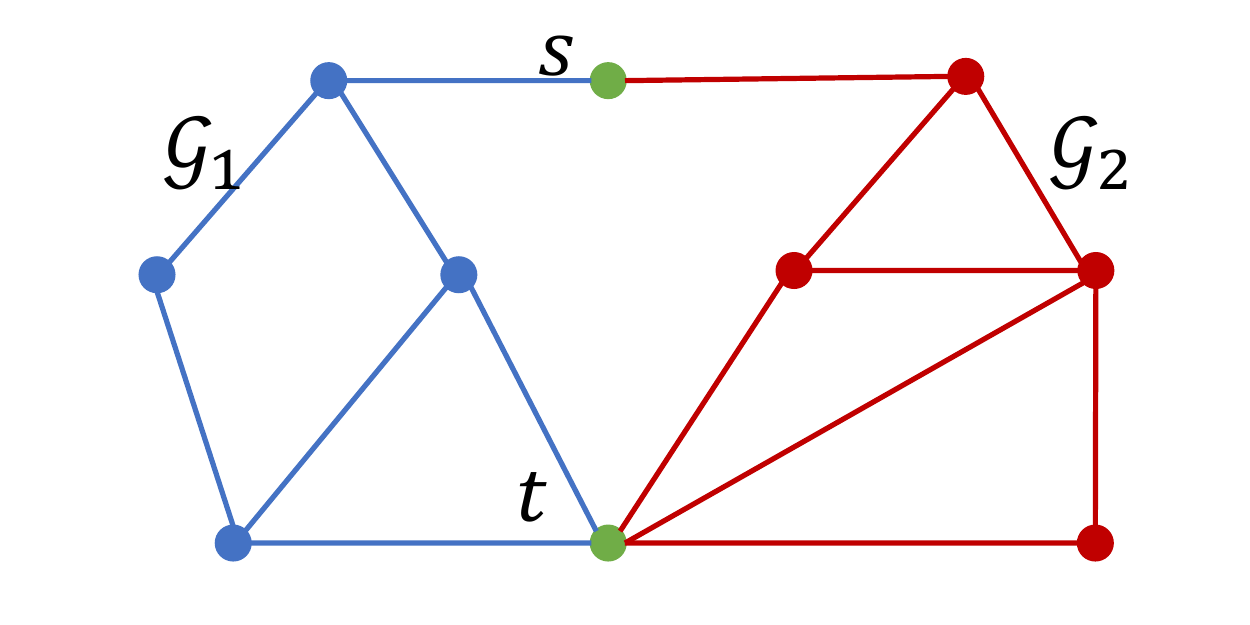} \label{fig_first_case}}
\hfil
\subfloat[Series network]{\includegraphics[width=0.43\columnwidth]{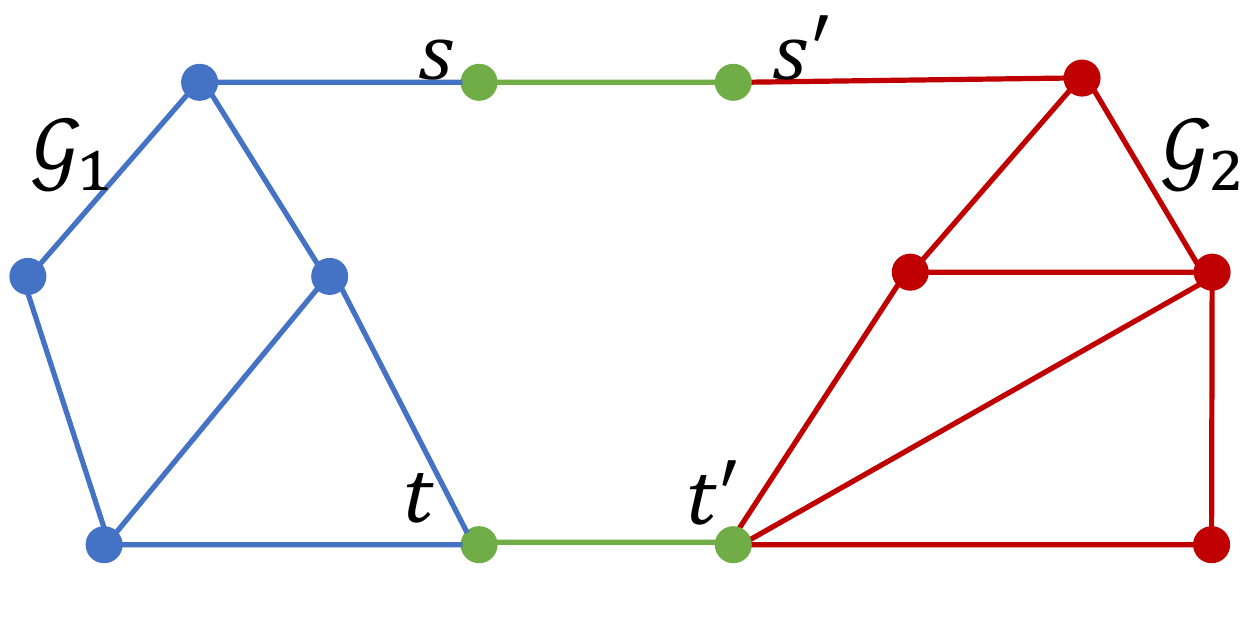} \label{fig_second_case}}
\\
\subfloat[Parallel network]{\includegraphics[width=0.43\columnwidth]{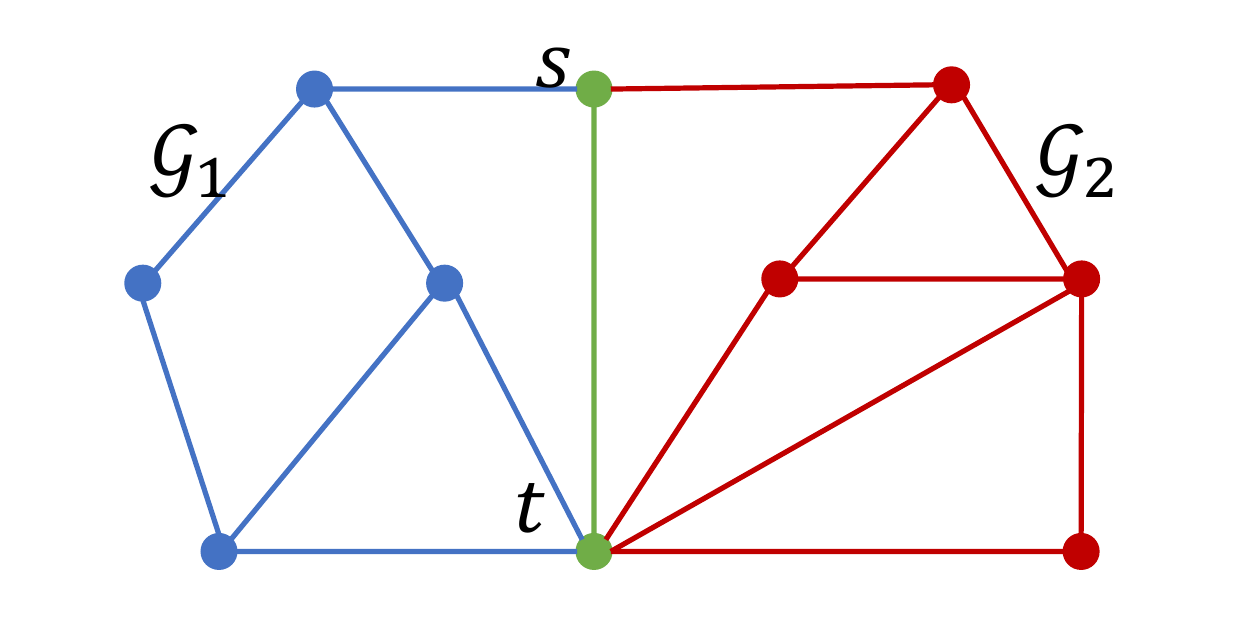} \label{fig_third_case}}
\hfil
\subfloat[Complete bipartite network]{\includegraphics[width=0.43\columnwidth]{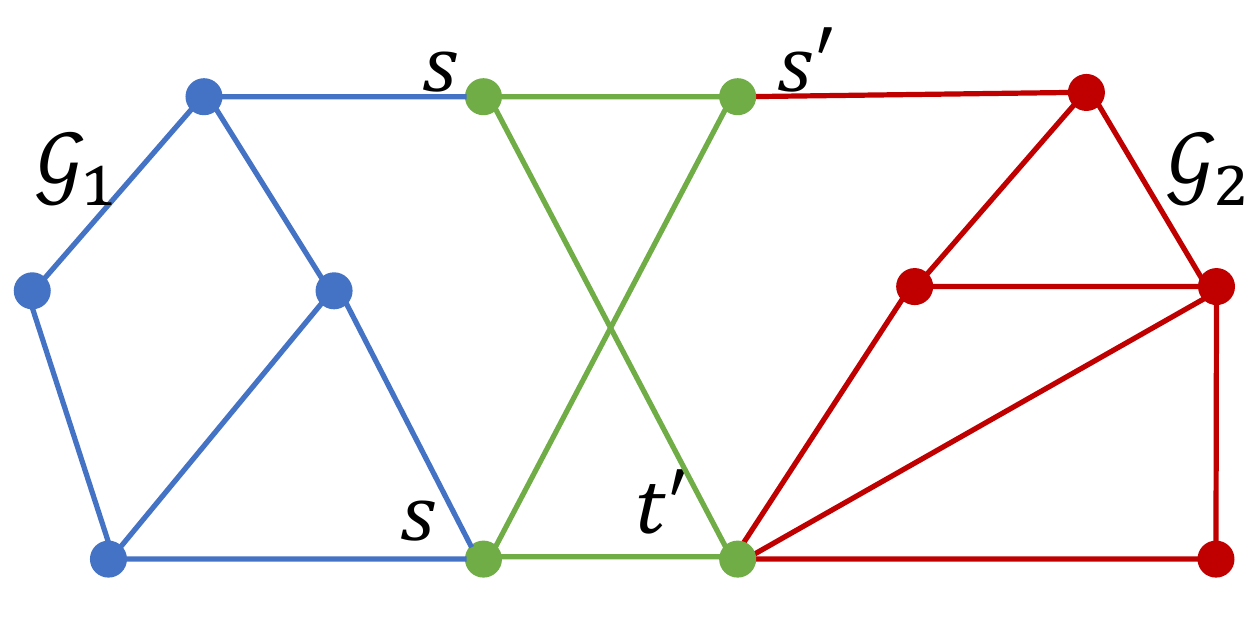} \label{fig_forth_case}}
\caption{Two sub-grids (a) are interconnected by (b) series, (c) parallel and (d) complete bipartite interface networks. 
}
\label{fig_sim}
\end{figure}


The sub-grids that make up an interconnected power system operate relatively independently and it is desirable to localize the impact of failure within the sub-grid to prevent large-scale blackouts.
Failures in power systems, however, are known to propagate non-locally. Real-world data shows that a transmission line failure can lead to another line failure far away from the initial failure~\cite{kinney2005modeling}. A recent study investigates the block decomposition of the power networks and demonstrates that non-cut failures are localized if the sub-grids are connected in a tree structure~\cite{guo2021line1}. In practice, however, designing a power system with tree-connected sub-grids creates bridges and thus introduces single-point vulnerabilities. Therefore, it is crucial to localize failures while maintaining connectivity of the grid.

In this section we consider a power network with two interconnected sub-grids $\calG_1, \calG_2$ joined by two buses $s$ and $t$ and propose three interface networks, as shown in Fig.~\ref{fig_sim}. In contrast to the tree-connected sub-grids proposed in~\cite{guo2018failure}, our design does not decrease the connectivity or introduce any single point of failure into the original network. 
We show that all three interface networks can achieve failure localization by carefully designing the susceptances. Note that the paper focuses on the interface networks of the power grid where sub-grids are joined by two buses. Larger interface networks require special topological structures of the sub-grids to guarantee failure localization. For this reason we leave this as a challenging topic for future work.

To quantify the benefit of the interface networks in Fig.~\ref{fig_sim}, we compare the LODF $K_{e,\hat e}$ of the original network and that of the modified network with various interface networks. The tripped line $\hat e$ and the monitored line $e$ are in different sub-grids. Without loss of generality, we assume that $\hat e\in\calG_1$ and $e\in\calG_2$. We assume buses $s,t$ are not directly connected (i.e., not adjacent to each other) in the original network to simplify our discussion. We use the superscript $(\cdot)^{(m)}$ to denote variables corresponding to the modified network.

\subsection{Series Interface Network}
We first introduce the $2\times2$ \emph{series} interface network where we split the buses $s,t$ and connect $(s,s')$ and $(t,t')$ as additional transmission lines, as shown in Fig.~\ref{fig_second_case}. Intuitively, the series interface network increases the topological distance between the tripped line $\hat e$ and the monitored line $e$. It is thus likely to reduce the failure impact across the sub-grids. As characterized by the following theorem, the LODF $K_{e,\hat e}^{(m)}$ for the modified grid with a series interface network is guaranteed to decrease under mild conditions.


\begin{thm}\label{thm:series}
If $\calG_1$ and $\calG_2$ are connected by a series network, then $|K^{(m)}_{e,\hat e}| \leq |K_{e,\hat e}|$, with equality if and only if there is no path connecting buses $s$ and $t$ in the post-contingency network of $\calG_1$. 
\end{thm}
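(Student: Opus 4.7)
The plan is to chain together the three technical results of Section~\ref{sec:pre} --- Lemma~\ref{lem:LODF_PTDF}, Proposition~\ref{prop:PTDF_decomp}, and Proposition~\ref{prop:mono_PTDF} --- in order to reduce the comparison $|K^{(m)}_{e,\hat e}| \le |K_{e,\hat e}|$ to a monotonicity statement about a single PTDF as the susceptance of a fictitious line varies. First, by Lemma~\ref{lem:LODF_PTDF}, both $K_{e,\hat e}$ and $K^{(m)}_{e,\hat e}$ can be rewritten as PTDFs on the respective post-contingency networks obtained by deleting $\hat e$. Because $\hat e \in \calG_1$, the sub-grid $\calG_2$ is identical in both post-contingency networks, so Proposition~\ref{prop:PTDF_decomp} applies to each and factors each PTDF as the product of a $\calG_1$-side contribution $\hat D_{(s,t),\hat i\hat j}$ on a fictitious-line-augmented version of $\calG_1\setminus\hat e$ and a $\calG_2$-side contribution $\bar D_{e,st}$. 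Since $\calG_2$ is unchanged, the $\bar D_{e,st}$ factor is the same in both cases, so it suffices to show $|\hat D^{(m)}_{(s,t),\hat i\hat j}| \le |\hat D_{(s,t),\hat i\hat j}|$.

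The crux of the argument is a Kron-reduction on the modified $\calG_1$-side network. In that network the fictitious line $(s,t)$ of susceptance $b^{(e,\calG_2)}_{st}$ is in series with the two interface lines $(s,s')$ and $(t,t')$, while the intermediate buses $s$ and $t$ carry no external injection. Kirchhoff's current law at $s$ and $t$ forces the three lines to carry identical flow, and standard series-combination rules collapse the chain into a single effective line between the $\calG_1$-side interface buses $s',t'$ of susceptance $b_{\text{eff}} = \bigl(b_{ss'}^{-1} + (b^{(e,\calG_2)}_{st})^{-1} + b_{tt'}^{-1}\bigr)^{-1}$, which is strictly smaller than $b^{(e,\calG_2)}_{st}$. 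Identifying $s',t'$ with $s,t$ (they play the same topological role in $\calG_1$), we conclude that $\hat D^{(m)}_{(s,t),\hat i\hat j}$ equals the PTDF on the original fictitious-line-augmented $\calG_1\setminus\hat e$ but with susceptance $b_{\text{eff}}$ in place of $b^{(e,\calG_2)}_{st}$. Since $b_{\text{eff}} < b^{(e,\calG_2)}_{st}$, Proposition~\ref{prop:mono_PTDF} immediately yields the desired inequality.

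For equality, Proposition~\ref{prop:mono_PTDF} asserts that the PTDF is strictly monotone in the susceptance unless the coefficient $T_3$ vanishes, which by the same proposition occurs exactly when the fictitious line $(s,t)$ is a bridge of the augmented network --- equivalently, when there is no $s$-$t$ path in $\calG_1 \setminus \hat e$. This matches the theorem's statement. The main obstacle I foresee is the careful bookkeeping of the Kron reduction and the relabeling identification between $\{s',t'\}$ and $\{s,t\}$ across the two networks, together with verifying that no hidden dependence on the modified network's Laplacian is lost in the decomposition. Handling the degenerate subcase in which $\hat e$ is itself a bridge of $\calG_1$, so that $\calG_1 \setminus \hat e$ is disconnected, requires a mild extension of Proposition~\ref{prop:PTDF_decomp}, but this case is automatically subsumed by the equality condition since then no $s$-$t$ path exists in the post-contingency $\calG_1$.
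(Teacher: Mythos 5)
Your proposal is correct and follows essentially the same route as the paper's proof: rewrite the LODF via Lemma~\ref{lem:LODF_PTDF} and Proposition~\ref{prop:PTDF_decomp}, observe that the $\calG_2$-side factor is unchanged, reduce the two interface lines in series with the fictitious line to a single line of strictly smaller susceptance $\bigl(b_{ss'}^{-1}+b_{tt'}^{-1}+(b^{(e)}_{st})^{-1}\bigr)^{-1}$, and invoke Proposition~\ref{prop:mono_PTDF} together with its bridge characterization of $T_3=0$ for the equality case. The only difference is bookkeeping (you attach the interface lines to the $\calG_1$-side network and Kron-reduce explicitly, whereas the paper folds them into the effective susceptance of $\calG_2^{(m)}$), which leads to the same computation.
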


\begin{proof}
	With Lemma~\ref{lem:LODF_PTDF} and Proposition~\ref{prop:PTDF_decomp}, we can write the LODF for the original network as:
	$$
	K_{e,\hat e} = \hat{D}_{(s,t), \hat i \hat j} \cdot \bar{D}_{e, st}.
	$$
	We use $\hat{D}_{(s,t), \hat i \hat j}$ to denote the PTDF for the fictitious line $(s,t)$ of the \emph{post}-contingency sub-grid of $\hat{\calG}_1$, with susceptance being the effective susceptance $b_{st}^{(e)}$ of the sub-grid $\calG_2$. $\bar{D}_{e, st}$ represents the PTDF for line $e$ of the sub-grid $\calG_2$.

	For the modified network $\calG^{(m)}$, let $\calG_1^{(m)}=\calG_1$ and $\calG_2^{(m)} = (\calN_2 \cup \{s', t'\}, \calE_2 \cup \{(s,s'), (t,t')\})$. The LODF can be decomposed similarly as:
	$$
	K_{e,\hat e}^{(m)} = \hat{D}^{(m)}_{(s,t), \hat i \hat j} \cdot \bar{D}^{(m)}_{e, st}.
	$$
	Note that $\bar{D}^{(m)}_{e, st} =  \bar{D}_{e, st}$ since line $(s,s')$ and $(t,t')$ are bridges of $\calG_2^{(m)}$. The effective susceptance for the fictitious line $b_{st}^{(m)} = (1/b_{ss'} + 1/b_{tt'} + 1/b_{st}^{(e})^{-1} < b_{st}^{(e)}$, so we have $|\hat{D}^{(m)}_{(s,t)} | \leq |\hat{D}_{(s,t)}| $ from Proposition~\ref{prop:mono_PTDF}. Therefore we conclude $|K_{e,\hat e}^{(m)}| \leq |K_{e,\hat e}|$, with equality if and only if the fictitious line $(s,t)$ is a bridge in the post-contingency network of $\hat{\calG}_1$.
\end{proof}

When no path in the post-contingency network of $\calG_1$ connects buses $s,t$, the LODF remains the same as that of the original network. In particular, the LODF will only depend on the structure of $\calG_2$ and equal the PTDF of the monitored line $e$ for the buses $s',t'$ in $\calG_2$.
Otherwise, the LODF strictly decreases compared with that of the original network. 

We remark that many empirical studies show that the LODF decreases as the distance from the initial failure increases~\cite{soltan2015analysis,kaiser2021topological}. Theorem~\ref{thm:series} provides theoretical support for such observations. Furthermore, the LODF $K^{(m)}_{e,\hat e}$ is a non-decreasing function in the susceptance of lines $(s,s')$ and $(t,t')$. Therefore, we can design the series interface network to achieve different levels of failure localizability. 

\subsection{Parallel Interface Network}
We now consider the \emph{parallel} interface network where we connect the buses $s$ and $t$, as shown in Fig.~\ref{fig_third_case}. Effectively, the line $(s,t)$ provides an alternative path to redistribute  power without passing through the other sub-grid $\calG_2$. Therefore we expect the line failures to be less impactful on the other sub-grid. Indeed, the following theorem shows that the LODF is guaranteed to decrease after connecting buses $s,t$. 

\begin{thm}\label{thm:parallel}
If $\calG_1$ and $\calG_2$ are connected by a parallel network, then $|K^{(m)}_{e,\hat e}| < |K_{e,\hat e}|$. 
\end{thm}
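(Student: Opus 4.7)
I would follow the same template as Theorem~\ref{thm:series}: apply Lemma~\ref{lem:LODF_PTDF} to rewrite each LODF as a post-contingency PTDF, and then use Proposition~\ref{prop:PTDF_decomp} to factor both $K_{e,\hat e}$ and $K^{(m)}_{e,\hat e}$. The natural choice is to assign the new parallel line to $\calG_2^{(m)}$, i.e.\ $\calG_1^{(m)}=\calG_1$ and $\calG_2^{(m)}=(\calN_2,\calE_2\cup\{(s,t)\})$. Because the added line sits in parallel with the two-terminal equivalent of $\calG_2$ between $s$ and $t$, susceptances add, yielding an effective susceptance $b_{st}^{(e)}+b_{st}$ between $s,t$ in $\calG_2^{(m)}$. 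Proposition~\ref{prop:PTDF_decomp} then produces the factorizations
\[
K_{e,\hat e}=\hat D_{(s,t),\hat i\hat j}\cdot\bar D_{e,st},\qquad K^{(m)}_{e,\hat e}=\hat D^{(m)}_{(s,t),\hat i\hat j}\cdot\bar D^{(m)}_{e,st}.
\]

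Next I would control each factor separately. For $\bar D^{(m)}_{e,st}$, I would argue via superposition: injecting one unit at $s$ and withdrawing one unit at $t$ in $\calG_2^{(m)}$ splits between the new line and the rest of $\calG_2$ (viewed as a two-terminal equivalent of susceptance $b_{st}^{(e)}$) by the current-divider rule, so the fraction $b_{st}^{(e)}/(b_{st}^{(e)}+b_{st})$ of the flow traverses $\calG_2$; linearity of the DC model then gives
\[
\bar D^{(m)}_{e,st}=\frac{b_{st}^{(e)}}{b_{st}^{(e)}+b_{st}}\,\bar D_{e,st}.
\]
For $\hat D^{(m)}_{(s,t),\hat i\hat j}$, I would observe that the original and modified PTDFs live on the \emph{same} underlying graph (post-contingency $\calG_1$ with the fictitious line $(s,t)$ appended), and differ only in the susceptance attached to the fictitious line, which increases from $b_{st}^{(e)}$ to $b_{st}^{(e)}+b_{st}$. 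Proposition~\ref{prop:mono_PTDF} then yields explicit expressions
\[
|\hat D|=\frac{T_1 b_{st}^{(e)}}{T_2 b_{st}^{(e)}+T_3},\qquad |\hat D^{(m)}|=\frac{T_1(b_{st}^{(e)}+b_{st})}{T_2(b_{st}^{(e)}+b_{st})+T_3},
\]
with the \emph{same} constants $T_1,T_2,T_3$ in both formulas.

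Multiplying the two factor ratios produces a clean cancellation,
\[
\frac{|K^{(m)}_{e,\hat e}|}{|K_{e,\hat e}|}=\frac{T_2 b_{st}^{(e)}+T_3}{T_2(b_{st}^{(e)}+b_{st})+T_3},
\]
which is strictly less than $1$ as soon as $b_{st}>0$ and $T_2>0$. Positivity of $T_2$ follows from the spanning-tree interpretation in Proposition~\ref{prop:mono_PTDF}: every edge of a connected graph lies in some spanning tree, so the fictitious edge $(s,t)$ contributes nontrivially to the $T_2 b_e$ term.

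The main obstacle is that Proposition~\ref{prop:mono_PTDF} alone pushes the first factor in the \emph{wrong} direction: increasing the fictitious line's susceptance makes $|\hat D^{(m)}|$ larger, not smaller. Monotonicity of the PTDF is therefore insufficient on its own. The proof only closes because the current-divider shrinkage in $\bar D^{(m)}_{e,st}$ is calibrated to cancel almost exactly with the growth of $|\hat D^{(m)}|$, leaving the residual $T_2 b_{st}$ in the denominator as the source of the strict reduction. Recognizing this cancellation structure, rather than bounding each factor individually, is the key insight one has to spot.
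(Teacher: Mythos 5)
Your proposal is correct and follows essentially the same route as the paper's proof: the identical decomposition $K^{(m)}_{e,\hat e}=\hat D^{(m)}_{(s,t),\hat i\hat j}\cdot\bar D^{(m)}_{e,st}$ with the new line assigned to $\calG_2^{(m)}$, the current-divider identity $\bar D^{(m)}_{e,st}=\tfrac{b_{st}^{(e)}}{b_{st}^{(e)}+b_{st}}\bar D_{e,st}$, and the cancellation against the Proposition~\ref{prop:mono_PTDF} expression for the first factor. Your added remark that strictness hinges on $T_2>0$ (via the spanning-tree interpretation) is a small but genuine sharpening of a point the paper leaves implicit.
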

\begin{proof}
Similar to the proof of Theorem~\ref{thm:series}, we can write the LODF for the original network and modified network as:
$$
K_{e,\hat e} = \hat{D}_{(s,t), \hat i \hat j} \cdot \bar{D}_{e, st}, \ 
K_{e,\hat e}^{(m)} = \hat{D}^{(m)}_{(s,t), \hat i \hat j} \cdot \bar{D}^{(m)}_{e, st},
$$
where we define $\calG_1^{(m)}=\calG_1, \calG_2^{(m)} = (\calN_2, \calE_2 \cup \{(s,t)\})$. From Proposition~\ref{prop:mono_PTDF}, we have
$$
 |\hat{D}_{(s,t), \hat i \hat j}| = \frac{T_1 b_{st}^{(e)}}{T_2 b_{st}^{(e)}+T_3}, \ 
  |\hat{D}^{(m)}_{(s,t), \hat i \hat j}| = \frac{T_1 b^{(m)}_{st}}{T_2 b^{(m)}_{st}+T_3},
$$
where $b^{(m)}_{st} = x + b_{st}^{(e)}$ with $x$ being the susceptance of the parallel line $(s,t)$. On the other hand, a simple circuit analysis shows that $\bar{D}^{(m)}_{e, st} = \frac{b_{st}^{(e)}}{b_{st}^{(m)}}\bar{D}_{e, st}$. 

Therefore, we can conclude that
\begin{subequations}
\begin{IEEEeqnarray*}{rCl}
	|K_{e,\hat e}^{(m)}| &=& \frac{T_1 b^{(m)}_{st}}{T_2 b^{(m)}_{st}+T_3} \cdot \frac{b_{st}^{(e)}}{b_{st}^{(m)}}|\bar{D}_{e, st}| \\
	&=& \frac{T_1 b_{st}^{(e)}}{T_2 b^{(m)}_{st}+T_3} |\bar{D}_{e, st}| \\
	&<& \frac{T_1 b_{st}^{(e)}}{T_2 b_{st}^{(e)}+T_3} |\bar{D}_{e, st}| \ \ = \ \ |K_{e,\hat e}|.
\end{IEEEeqnarray*}
\end{subequations}
\end{proof}
We remark that the LODF $K_{e,\hat e}$ is monotonically decreasing in the susceptance of the parallel line $(s,t)$. We can thus increase the susceptance of line $(s,t)$ to improve the failure localizability. On the other hand, the LODF for the parallel line $(s,t)$ increases as the susceptance increases according to Proposition~\ref{prop:mono_PTDF}. Thus we need to systematically design the susceptance of the line $(s,t)$.

\subsection{Complete Bipartite Network}
We now introduce the $2\times 2$ \emph{complete bipartite} interface network with two buses on each side, where we split the buses $s, t$ and connect $(s,s')$, $(s, t')$, $(t,s')$ and $(t,t')$ respectively. This design is similar to the Wheatstone bridge in circuit analysis literature. We show in the following theorem that the LODF for lines across sub-grids can be upper bounded. In particular, the impact of failures can be completely eliminated under the condition $b_{ss'}b_{tt'} = b_{st'}b_{ts'}$, where $b_{pq}$ denotes the susceptance of line $(p,q)$. We remark that this specific interface network has been proposed in~\cite{kaiser2021network} as the \emph{network isolator} and shown to provide localization if a rank-1 condition holds on the weighted adjacency matrix of the interface network. The rank-1 condition is equivalent to $b_{ss'}b_{tt'} = b_{st'}b_{ts'}$ for the $2\times2$ complete bipartite network.
Our result generalizes the failure localization properties of a network isolator to the case in which the rank-1 condition does not hold for the four-node bipartite network.

\begin{thm} \label{thm:bipartite}
	If $\calG_1$ and $\calG_2$ are connected by a complete bipartite network, then $|K^{(m)}_{e, \hat e}| \leq \frac{\abs{b_{ss'}b_{tt'} - b_{st'}b_{ts'}}}{(b_{ss'}+b_{st'})(b_{tt'}+b_{ts'})}$ where $b_{pq}$ is the susceptance for line $(p,q)$. In particular, if $b_{ss'}b_{tt'} = b_{st'}b_{ts'}$, then $K^{(m)}_{e,\hat e}=0$.
\end{thm}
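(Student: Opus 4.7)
The plan is to reduce the theorem to an explicit Wheatstone-bridge computation by iterating the decomposition in Proposition~\ref{prop:PTDF_decomp}. Let $\calH$ denote the four-bus complete bipartite interface network, and let $\calG_2^{(m)}$ be the copy of $\calG_2$ with $s,t$ renamed $s',t'$. By Lemma~\ref{lem:LODF_PTDF}, $K^{(m)}_{e,\hat e}$ equals the PTDF $\tilde D^{(m)}_{e,\hat i\hat j}$ in the post-contingency modified network. I would first decompose that network into $\tilde\calG_1$ and $\calH\cup\calG_2^{(m)}$, which share exactly $\{s,t\}$, and apply Proposition~\ref{prop:PTDF_decomp}; I would then decompose $\calH\cup\calG_2^{(m)}$ further into $\calH$ and $\calG_2^{(m)}$, which share exactly $\{s',t'\}$, and apply Proposition~\ref{prop:PTDF_decomp} a second time. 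This yields the three-factor identity
\begin{equation*}
K^{(m)}_{e,\hat e} \;=\; \hat D^{(\tilde\calG_1)}_{(s,t),\hat i\hat j}\cdot\hat D^{\calH}_{(s',t'),s,t}\cdot\bar D^{(\calG_2^{(m)})}_{e,s',t'},
\end{equation*}
where the middle factor is the PTDF of the fictitious line $(s',t')$ with susceptance $b^*:=b^{(e)}_{s't'}$ (the effective susceptance of $\calG_2^{(m)}$ between $s',t'$) in the interface augmented by this fictitious line.

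Next, the two outer factors are each bounded by $1$ in absolute value, a standard property of PTDFs in passive resistive networks. One may justify this with a maximum-principle argument: for a unit injection/withdrawal pair, the induced potentials are monotone between source and sink, so for any edge $e'$ the cut obtained by thresholding at the midpoint of its endpoint potentials consists entirely of edges carrying nonnegative current summing to one, so no single edge's current can exceed one.

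It then suffices to bound the Wheatstone-bridge factor $\hat D^{\calH}_{(s',t'),s,t}$ explicitly. Setting $V_t=0$ and writing KCL at $s'$ and $t'$ in the augmented interface gives a $2\times 2$ linear system in $(V_{s'},V_{t'})$ parameterized by $V_s$; solving via Cramer's rule produces $V_{s'}-V_{t'} = \Delta V_s/(AB+(A+B)b^*)$, where $A:=b_{ss'}+b_{ts'}$, $B:=b_{st'}+b_{tt'}$, and $\Delta:=b_{ss'}b_{tt'}-b_{st'}b_{ts'}$. Enforcing unit injection at $s$ via KCL at $s$ and extracting the current $b^*(V_{s'}-V_{t'})$ through the fictitious line yields the closed form
\begin{equation*}
\hat D^{\calH}_{(s',t'),s,t} \;=\; \frac{b^*\Delta}{b_{ss'}b_{ts'}B+b_{st'}b_{tt'}A+b^*(b_{ss'}+b_{st'})(b_{ts'}+b_{tt'})}.
\end{equation*}
Since the first two terms in the denominator are nonnegative, dropping them gives $|\hat D^{\calH}_{(s',t'),s,t}|\le |\Delta|/[(b_{ss'}+b_{st'})(b_{ts'}+b_{tt'})]$; combined with the unit outer bounds this proves the claimed inequality, and when $\Delta=0$ the middle factor vanishes so $K^{(m)}_{e,\hat e}=0$.

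The principal obstacle is the algebra of this Wheatstone-bridge computation: although it reduces to a small linear system, the bookkeeping must be organized carefully so that the numerator collapses to exactly the balance discriminant $\Delta$ and the dominant denominator term factors as $(b_{ss'}+b_{st'})(b_{ts'}+b_{tt'})$, which is what makes the bound tight in the right sense and yields the clean rank-one zero-PTDF criterion.
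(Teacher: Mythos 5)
Your proposal is correct and follows essentially the same route as the paper: decompose $K^{(m)}_{e,\hat e}$ via Lemma~\ref{lem:LODF_PTDF} and repeated application of Proposition~\ref{prop:PTDF_decomp} into a product of PTDFs, bound the outer factors by $1$, and explicitly solve the Wheatstone-bridge system for the middle factor, whose denominator's nonnegative cross terms are dropped to obtain the stated bound (your closed form agrees with the paper's, whose printed denominator term $(b_{ss'}b_{st'})(b_{ts'}+b_{tt'})b^{(e)}$ appears to be a typo for $(b_{ss'}+b_{st'})(b_{ts'}+b_{tt'})b^{(e)}$). The only additions are your explicit justification of the $|D|\leq 1$ property and the Cramer's-rule bookkeeping, which the paper leaves implicit.
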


\begin{proof}
We have 
$$
K_{e,\hat e}^{(m)} = \hat{D}^{(m)}_{(s,t), \hat i \hat j} \cdot \bar{D}^{(m)}_{e, st},
$$
where $\calG_1^{(m)}=\calG_1$ and $\calG_2^{(m)} = (\calN_2 \cup \{s', t'\}, \calE_2 \cup \{(s,s'),(t,t'),(s,t'),(t,s')\})$. 
Since the PTDF is guaranteed to be within $[-1,1]$, we first bound $|\hat{D}^{(m)}_{(s,t), \hat i \hat j}|$ by 1 and focus on the second term.
Moreover, we can further decompose $\calG_2$ into $\calG_2^1 = (\{s,s',t,t'\},\{(s,s'),(t,t'),(s,t'),(t,s')\})$ and $\calG_2^2 = (\calN_2 \cup \{s',t'\}, \calE_2)$. Therefore, we have
$$
|K_{e,\hat e}^{(m)}|  \leq |\bar{D}^{(m)}_{e, st}| = |\hat{D}^{1}_{(s',t'),st}| \cdot |\bar{D}^{2}_{e, s't'}| \leq |\hat{D}^{1}_{(s',t'),st}|.
$$
Now all we need is to provide an upper bound for the right hand side, which is the PTDF for the fictitious line $(s',t')$ with effective susceptance $b^{(e)}$ for $\hat{\calG}_2^1 =  (\{s,s',t,t'\},\{(s,s'),(t,t'),(s,t'),(t,s'),(s',t')\})$. We can compute the PTDF as in~\eqref{eqn:PTDF.1}:
\begin{IEEEeqnarray*}{lll}
	\hat{D}^{1}_{(s',t'),st} &=& b^{(e)}(b_{ss'}b_{tt'}-b_{st'}b_{ts'}) / \\
	& & [(b_{ss'}b_{st'}b_{ts'} +b_{ss'}b_{st'}b_{tt'} +  b_{ss'}b_{ts'}b_{tt'} +\\
	&&b_{st'}b_{ts'} b_{tt'}) + (b_{ss'}b_{st'})(b_{ts'}+b_{tt'})b^{(e)} ] 
\end{IEEEeqnarray*}
Therefore, we conclude an upper bound for the LODF:
$$
|K_{e,\hat e}^{(m)} | \leq |\hat{D}^{1}_{(s',t'),st}| \leq \frac{\abs{b_{ss'}b_{tt'} - b_{st'}b_{ts'}}}{(b_{ss'}+b_{st'})(b_{tt'}+b_{ts'})}.
$$
\end{proof}

Note that the bound depends only on the susceptance of the transmission lines for the complete bipartite network, and hence, is valid for every pair of the tripped line $\hat e$ and the monitored line $e$ in different sub-grids. In practice, the actual LODF under the complete bipartite interface network is usually much lower than the theoretical bound due to the internal connectivity of the network. Therefore, the complete bipartite network can provide strong failure localization.

We remark that the complete bipartite interface network can be designed not only to eliminate the impact outside the sub-grid where the failure happens, but to maintain the same level of robustness within the sub-grid. Specifically, as stated in the following theorem, the LODF remains the same as the original network if the lines are in the same sub-grids, while the LODF is zero if the lines are in different sub-grids. 

\begin{thm}
Consider a network $\calG$ consisting of two sub-grids $\calG_1, \calG_2$ joined by two buses $s$ and $t$, and the modified network with the $2\times 2$ complete bipartite interface network. Suppose the effective susceptances between buses $s$ and $t$ for the two sub-grids $\calG_1, \calG_2$ is $b_1^{(e)}$ and $b_2^{(e)}$ respectively. If the susceptances of the lines in the complete bipartite network satisfies the following condition:
\begin{align*}
&b_{tt'} < \min (b_1,b_2) \text{ or } b_{tt'} > \max (b_1,b_2),\\
&b_{ss'} = \frac{b_1b_2}{b_{tt'}}, \ b_{st'} = \frac{b_2(b_1-b_{tt'})}{b_2-b_{tt'}}, \ b_{ts'} = \frac{b_1(b_2-b_{tt'})}{b_1-b_{tt'}},
\end{align*}
then we have
$$
K_{e,\hat e}^{(m)} = 
\begin{cases}
K_{e,\hat e}, \text{ if the lines $e,\hat e$ are in the same sub-grid,}\\
0, \text{ if the lines $e,\hat e$ are in different sub-grids.}
\end{cases}
$$
\end{thm}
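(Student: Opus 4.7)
The strategy is to reduce both cases to two identities about the interface susceptances: the Wheatstone-bridge balance
\[
b_{ss'} b_{tt'} \;=\; b_{st'} b_{ts'} \;=\; b_1 b_2,
\]
and the fact that the interface together with the opposite sub-grid presents at $\{s,t\}$ (respectively $\{s',t'\}$) an effective susceptance of exactly $b_2$ (respectively $b_1$).

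The first identity follows by direct substitution: $b_{ss'} b_{tt'} = (b_1 b_2 / b_{tt'}) \cdot b_{tt'}$ collapses immediately, and the product $b_{st'} b_{ts'}$ collapses after the factors $(b_1 - b_{tt'})$ and $(b_2 - b_{tt'})$ cancel between numerator and denominator. The hypothesis on $b_{tt'}$ ensures that $b_1 - b_{tt'}$ and $b_2 - b_{tt'}$ share a sign, so all four interface susceptances are positive and the interface corresponds to a physical network. Given the balance, the cross-sub-grid claim ($K_{e,\hat e}^{(m)} = 0$ whenever $e, \hat e$ lie in different sub-grids) is an immediate application of Theorem~\ref{thm:bipartite}.

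For the same-sub-grid case, say $e, \hat e \in \calE_1$, I would invoke the Kron reduction that underlies Proposition~\ref{prop:PTDF_decomp}: since $\calG_1$ is joined to the remainder of the network only through the two buses $\{s, t\}$, every PTDF and LODF internal to $\calG_1$ depends on that remainder only through its effective susceptance between $s$ and $t$. In the original network this is $b_2^{(e)} = b_2$. In the modified network, $\calG_2^{(m)}$ (a relabeling of $\calG_2$) can itself be reduced to a fictitious edge $(s',t')$ of susceptance $b_2$, leaving a four-node network with five edges to analyze. By the Wheatstone balance established above, with terminals $\{s,t\}$ and bridge $(s',t')$, no current flows through the bridge when a probe current is injected at $s$ and withdrawn at $t$; the effective susceptance therefore reduces to the series--parallel combination
\[
\frac{b_{ss'} b_{ts'}}{b_{ss'} + b_{ts'}} \;+\; \frac{b_{st'} b_{tt'}}{b_{st'} + b_{tt'}}.
\]
Substituting the prescribed formulas and factoring $b_1 b_2 - b_{tt'}^2$ from the common denominators, this expression simplifies to $b_2$. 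A symmetric calculation with terminals $\{s',t'\}$ and bridge $(s,t)$ of susceptance $b_1$ gives the analogous identity for $\calG_2$. Hence both networks yield the same Kron-reduced model from the perspective of either sub-grid, and the LODFs coincide: $K_{e,\hat e}^{(m)} = K_{e,\hat e}$.

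The main obstacle is the last piece of algebra, verifying that the series--parallel sum collapses cleanly to $b_2$ (and symmetrically $b_1$). The computation is mechanical but requires carefully tracking the shared factor $b_1 b_2 - b_{tt'}^2$ that appears in both numerators and denominators after substitution; the prescribed formulas for $b_{ss'}, b_{st'}, b_{ts'}$ are exactly what is needed to make this cancellation happen. Once this identity is in hand, the theorem follows by combining Theorem~\ref{thm:bipartite} with the Kron-reduction invariance of PTDFs within a sub-grid.
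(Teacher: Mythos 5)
Your proof is correct and follows essentially the same route as the paper, whose own argument is only a one-line sketch appealing to the invariance of the effective susceptances seen from $\{s,t\}$ and $\{s',t'\}$. You supply exactly the details the paper omits: the Wheatstone balance $b_{ss'}b_{tt'}=b_{st'}b_{ts'}=b_1b_2$ (which gives the cross-sub-grid case via Theorem~\ref{thm:bipartite}), and the series--parallel computation showing the reduced susceptances collapse to $b_2$ and $b_1$, which I have checked and which does cancel cleanly through the common factor $b_1b_2-b_{tt'}^2$.
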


\begin{proof}[Proof (sketch)]
This result can be proved using the fact that the effective susceptance between buses $(s,t)$ and $(s',t')$ remains the same if the conditions are satisfied. 
\end{proof}

\subsection{Comments}
The theoretical analysis presented in this section focuses on non-bridge line failures where post-contingency power injections are assumed to remain constant. In practice, however, the injections might change due to the real-time automatic controls of the power grid. The situation is even more complicated when islanding occurs due to bridge line failures. Even if a detailed model of these situations is beyond the scope of this paper, the three interface networks presented here are capable of localizing the impact of injection fluctuations, as it can be seen through a similar analysis of the PTDF. 

We remark that the sensitivity factors PTDF and LODF we considered in this paper are independent of the power injections and transmission line capacities and, in fact, are known to only depend on the topological structure of the power grid, see~\cite{guo2021line1}. Therefore, our analysis sheds light on how the power grid can be optimized by possibly re-designing the network through line switching or bus splitting for planning purposes. If the pre-contingency system state and the line capacities are known, our analysis on LODF can be helpful to identify possible successive failures, for which corrective control actions can then be performed.

\begin{figure*}[htbp]
	\centering
	\subfloat[Series interfacing network]{\includegraphics[width=0.3\textwidth]{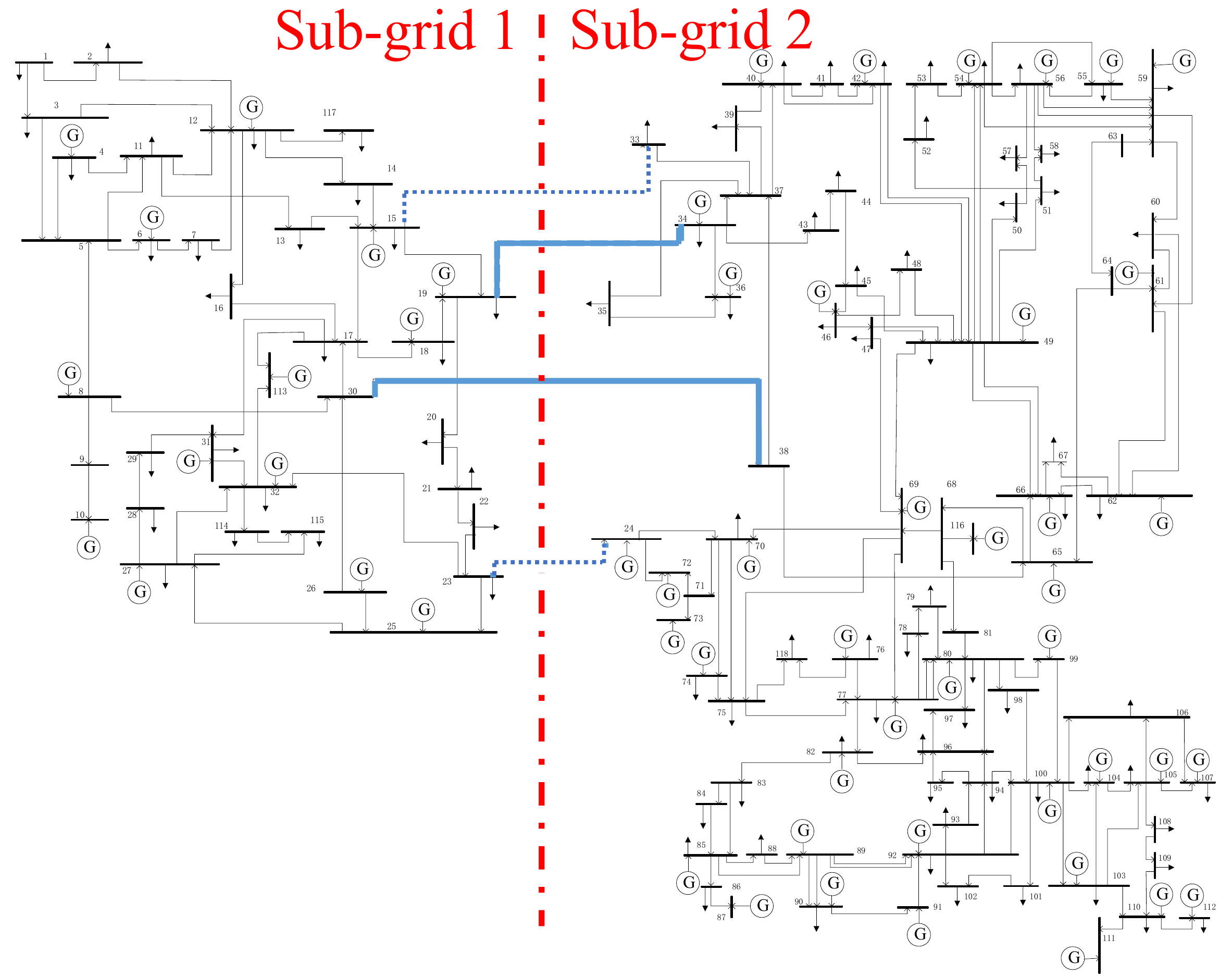} \label{fig_118_series}}
	\hfil
	\subfloat[Parallel interfacing network]{\includegraphics[width=0.3\textwidth]{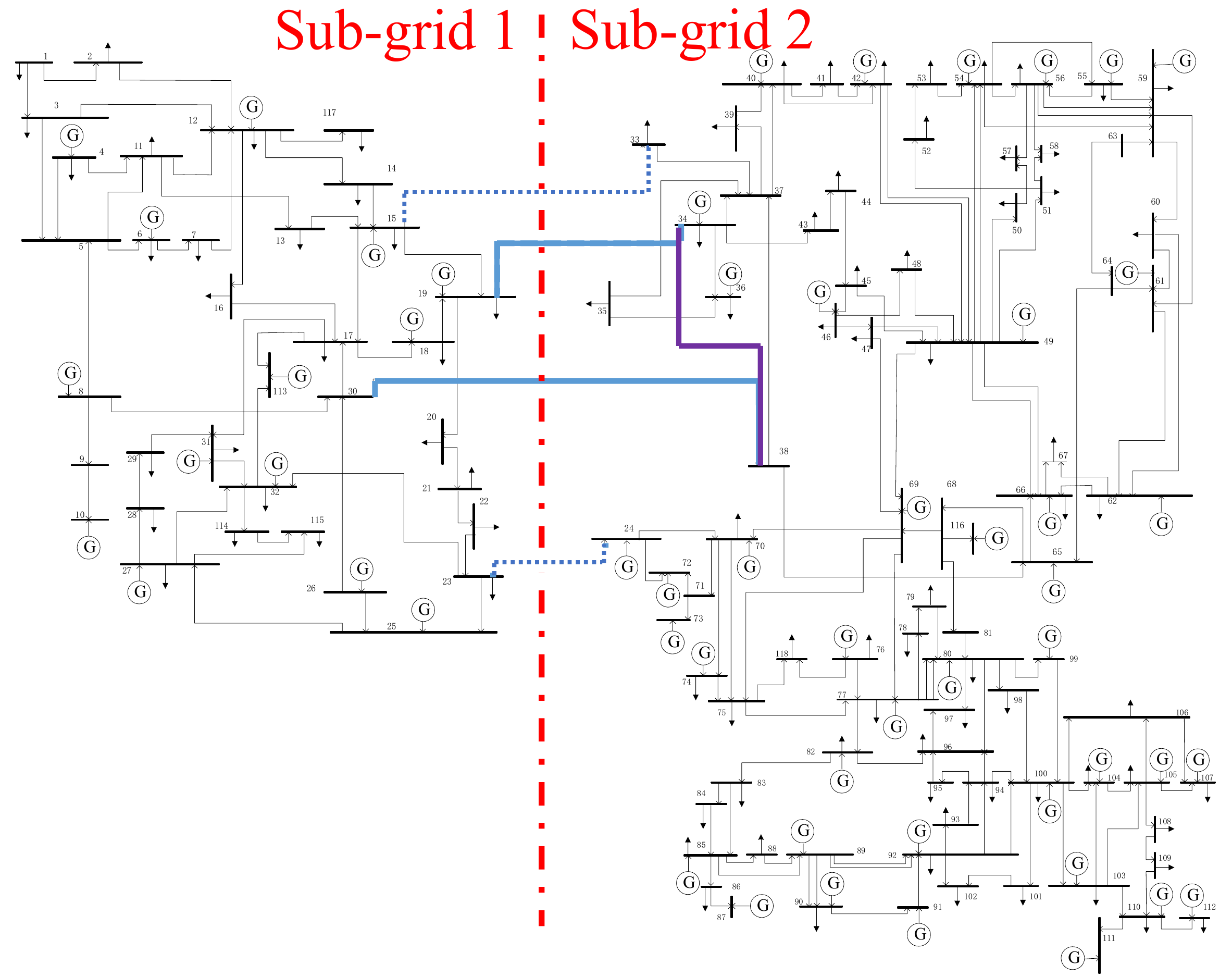} \label{fig_118_parallel}}
	\hfil
	\subfloat[Complete bipartite network]{\includegraphics[width=0.3\textwidth]{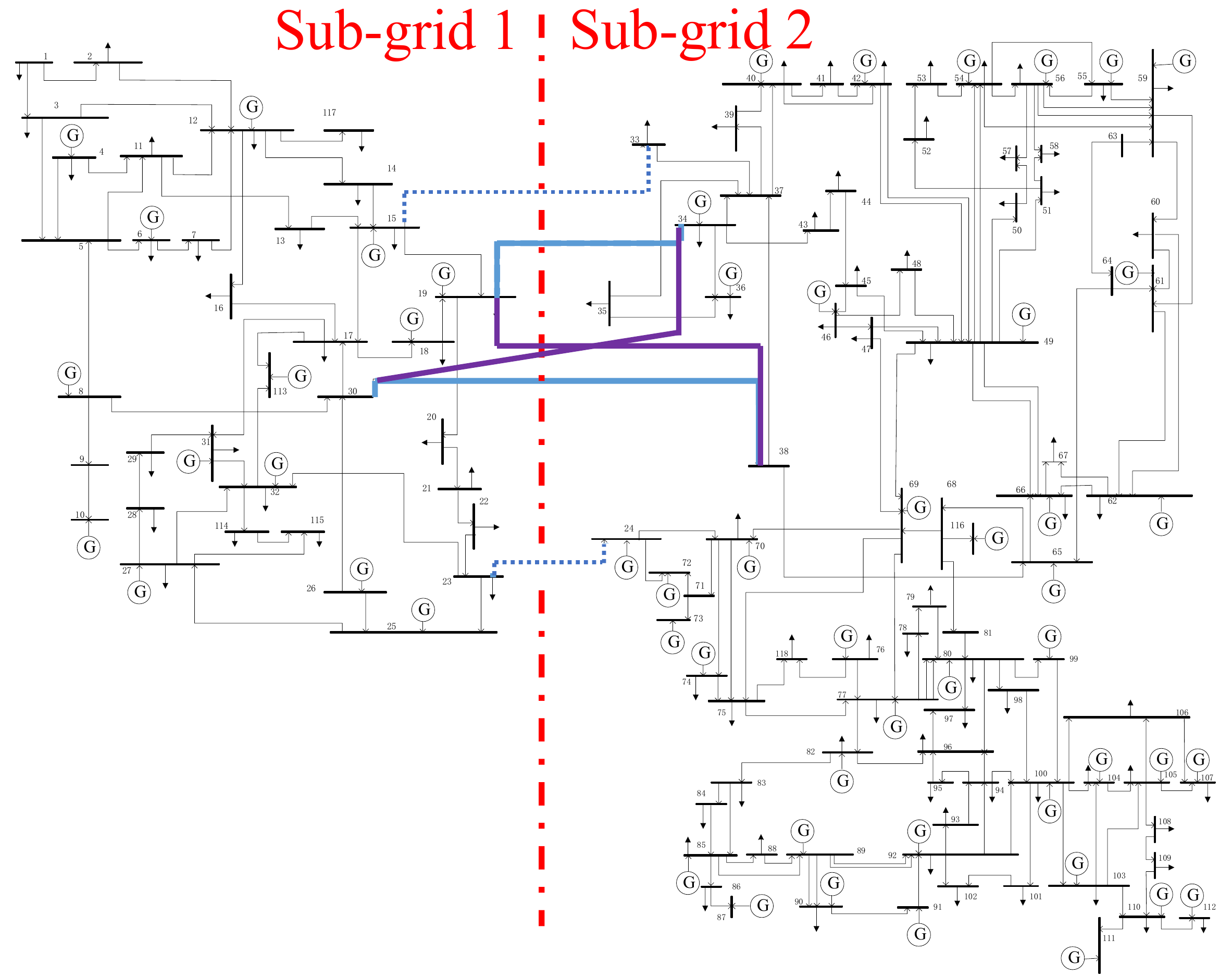} \label{fig_118_bipartite}}
	\caption{Two sub-grids are connected by 4 blue lines in the original IEEE 118-bus network. (a) Two dashed lines are switched off to create the series interface network. (b) One purple line is added to create the parallel interface network. (c) Two purple lines are added to create the complete bipartite network.}
	\label{fig_118}
\end{figure*}

\section{Case Study}\label{sec:simulation}

In this section, we evaluate the failure localization performance of the three interface networks studied in the previous section for the IEEE 118-bus test network. We start with the DC model, and then extend it to the AC model. 

\subsection{Experimental Setup}

We split the IEEE 118-bus test network network into two sub-grids connected by four tie-lines as shown in Fig.~\ref{fig_118}. Note that the sub-grids are not connected by the cut vertices as in Fig.~\ref{fig_first_case}. Therefore we modify the tie-lines connecting the sub-grids to create interface networks as follows. For a series interface network, we switch off the two dashed blue lines and keep the two solid blue lines as in Fig.~\ref{fig_118_series}. A parallel interface network is built on top of the series network, where we add the purple line as in Fig.~\ref{fig_118_parallel}. The complete bipartite interface network is achieved by connecting the end-points of solid blue lines as the solid purple lines in Fig.~\ref{fig_118_bipartite}. 
We then calculate the LODF as a metric to quantify the failure impact for each interface network.

\begin{figure}[!h]
	\centering
	\subfloat[DC LODF for lines in different sub-grids.]{\includegraphics[width=0.74\columnwidth]{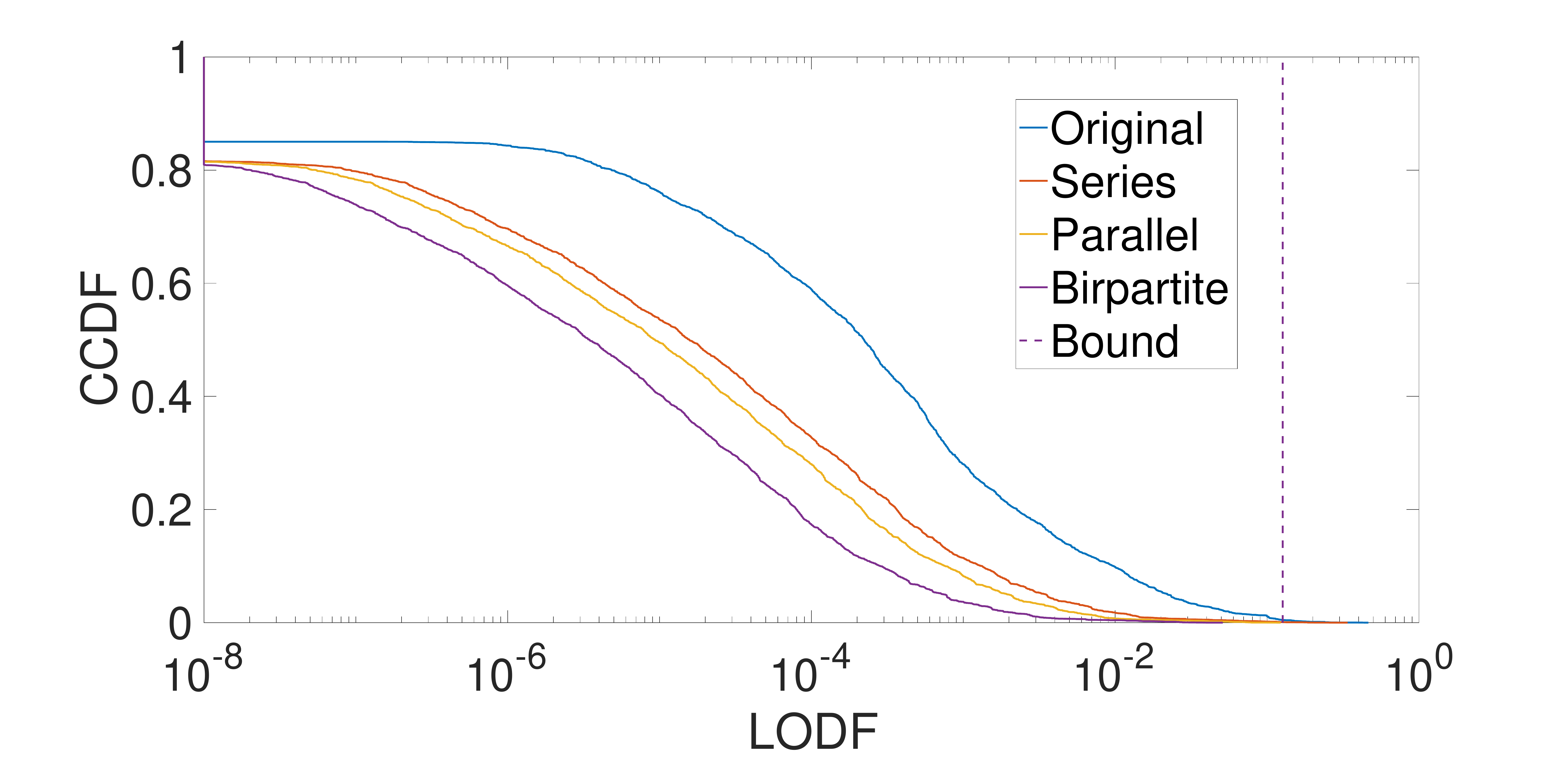} \label{fig_LODF_cross}}\\
	\subfloat[AC LODF for lines in different sub-grids.]{\includegraphics[width=0.74\columnwidth]{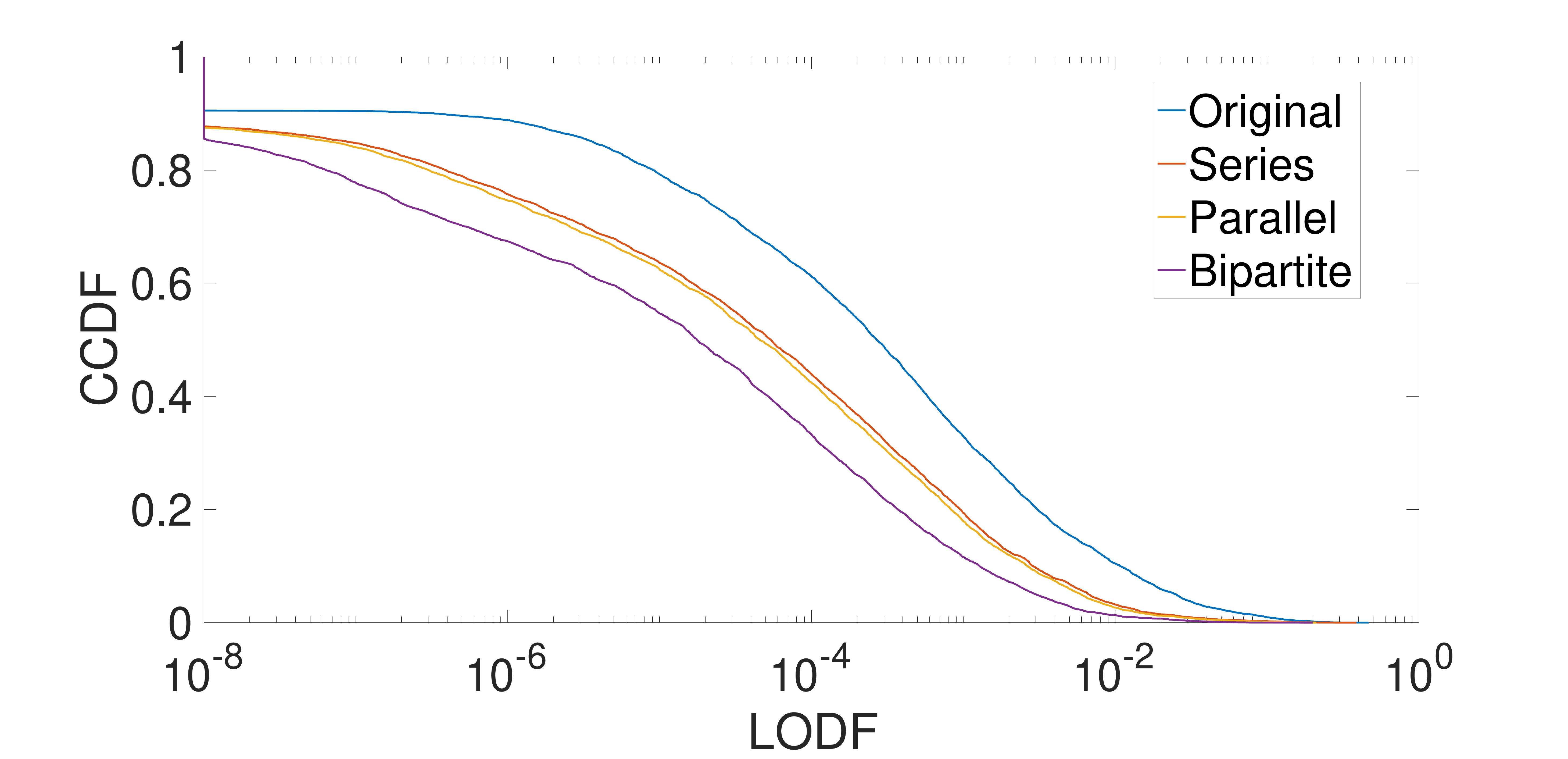} \label{fig_LODF_cross_AC}}\\
	\subfloat[DC LODF for lines in the same sub-grid.]{\includegraphics[width=0.74\columnwidth]{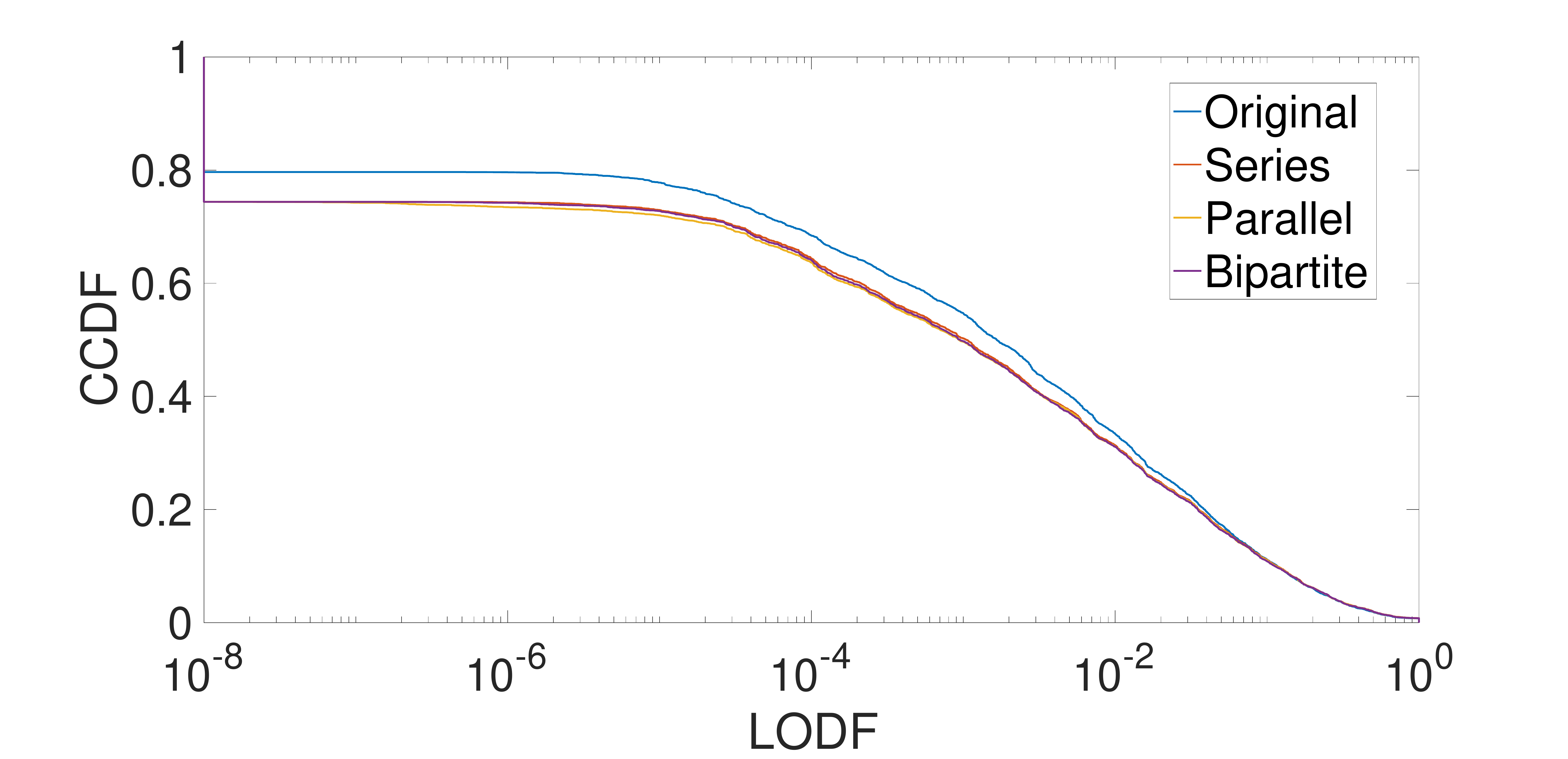} \label{fig_LODF_same}}\\
	\subfloat[AC LODF for lines in the same sub-grid.]{\includegraphics[width=0.74\columnwidth]{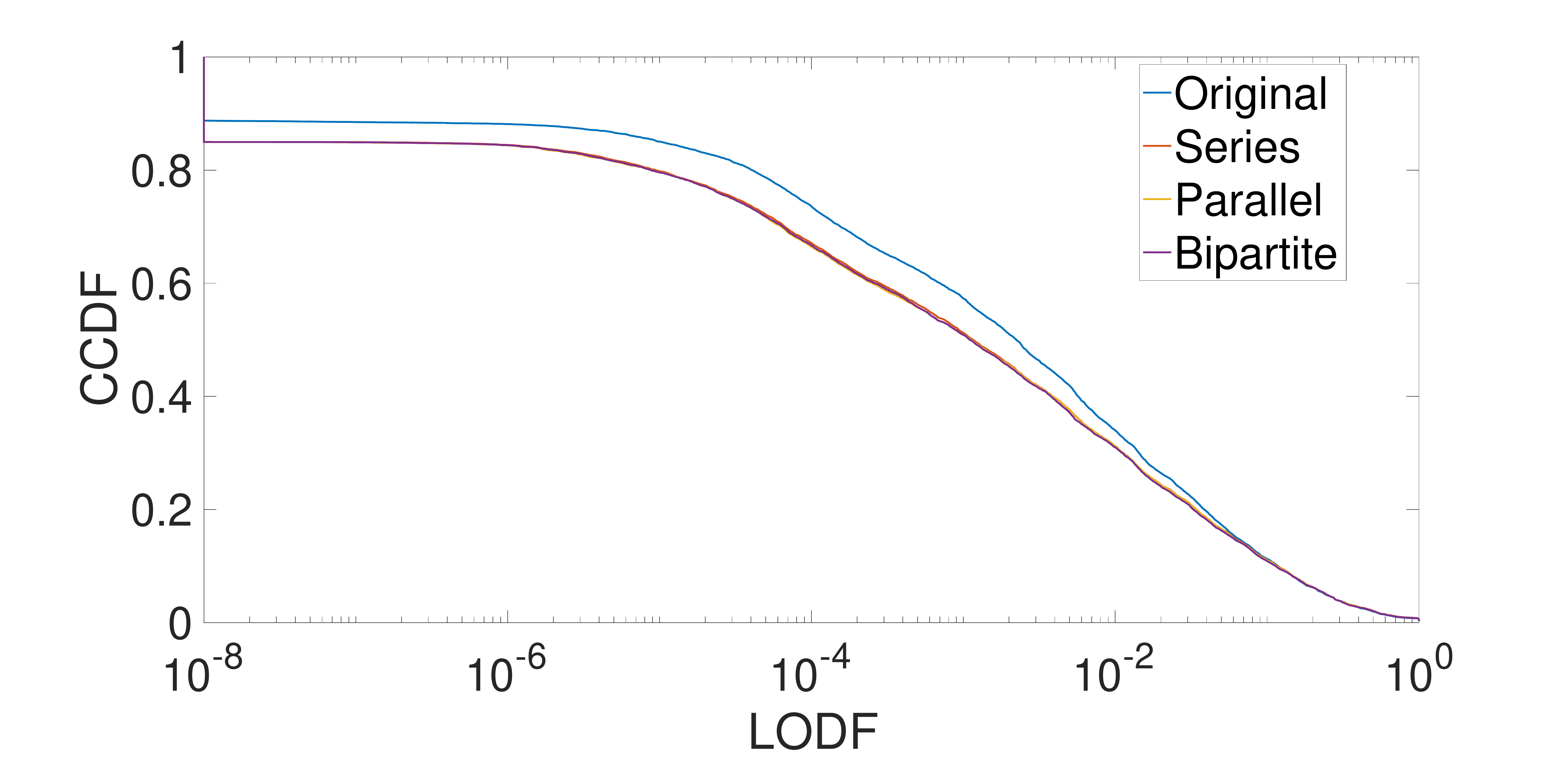} \label{fig_LODF_same_AC}}
	\caption{The CCDF of LODF for monitored line $e$ and tripped line $\hat e$ under DC (a,c) and AC (b,d) models. (a,b) $e,\hat e$ are in different sub-grids. (c,d) $e,\hat e$ are in the same grid.}
	\label{fig_sim_dc}
\end{figure}

\subsection{Experimental Results}

\paragraph{DC Model} We start with evaluating failure localization under the DC power flow model. The DC LODF is well-defined and can be computed as~\eqref{eqn:LODF} if the tripped line does not disconnect the network. 

We first compare the failure localizability across the sub-grids under various interface networks. Specifically, we compute the LODF for all pairs of tripped lines $e$ and monitored lines $\hat e$ in different sub-grids and demonstrate the complementary cumulative distribution function (CCDF) of the absolute LODF in Fig.~\ref{fig_LODF_cross}. Note that the x-axis is in logarithmic scale and we set the LODF $|K_{e,\hat e}|\leq 10^{-8}$ as zero. The vertical dashed line represents the theoretical bound of the LODF for the complete bipartite network.
We observe that all three interface networks reduce the LODF across the sub-grids. For the original 118-bus network, there are roughly $10\%$ pairs of lines with the absolute LODF greater than 0.01, while those cases are negligible ($1\%$) with the series interface network. As expected, adding a parallel interface network on top of the series network further decreases the LODF. The complete bipartite interface network achieves the best localization performance, even though the susceptance does not satisfy the rank-1 condition to completely localize the failure within the sub-grid, i.e. $b_{ss'}b_{tt'} \neq b_{st'}b_{ts'}$.

It is crucial to analyze the impact within the same sub-grid where the line failure happens as well. In Fig.~\ref{fig_LODF_same}, we show the CCDF of the absolute LODF for the pairs of tripped line and monitored line within the same sub-grid. We observe that the distributions of LODF within the sub-grid for the series, parallel and complete bipartite interface networks are very similar, all lower than the original network. Therefore, introducing the proposed interface networks properly will not decrease the robustness for the sub-grids against failures.

We remark that the two sub-grids of IEEE 118-bus network does not follow the definition of original network in Section~\ref{sec:localization}: they are connected by four tie-lines instead of only two buses. Nevertheless, the LODF for the modified networks with all three interface network decreases. It suggests a broader range of applicability and stronger failure localizability for the interface networks. This, however, requires a proper selection on which transmission lines to keep, and we leave it as a future direction to explore.

\paragraph{AC model}

We further evaluate the localization performance under AC model. Since there is no closed-form expression for AC LODF, we calculate the LODF directly using the definition.
Specifically, we adopt the line parameters and the nominal power injections from~\cite{babaeinejadsarookolaee2019power} as the pre-contingency operating status. For every non-bridge transmission line $\hat e$, we compute the post-contingency flow with AC power flow equations when line $\hat e$ trips, assuming that the post-contingency injections remain the same. The LODF is thus computed as $K_{e,\hat e}=\frac{\Delta f_e}{f_{\hat e}}$, where $\Delta f_e$ is the flow change over line $e$ and $f_{\hat e}$ is the pre-contingency flow over line $\hat e$.

The CCDF of LODF for all pairs of the monitored line and the tripped line are shown in Fig.~\ref{fig_LODF_cross_AC} and Fig.~\ref{fig_LODF_same_AC}. We notice that the network in which the sub-grids are connected by any of the three interface networks achieves higher failure localizability similarly to the DC model. It should be noted that the LODF is not zero for the complete bipartite network under the AC model, even when the susceptances are designed to satisfy the rank-1 condition. Nevertheless, all interface networks reduce failure impact across sub-grids, while maintaining similar robustness within the sub-grid.


\section{Conclusion}\label{sec:con}
In this paper, we propose three interface networks connecting sub-grids to achieve stronger failure localization while maintaining robust network connectivity. Both theoretical analysis and case studies validate our proposed method. There are a number of important directions for future exploration of this topic. The most important and challenging extension is to consider larger interface networks. In this paper we have considered $2\times2$ interface networks, but larger networks have the potential to provide even more robust connections between sub-grids. However, it is quite challenging to characterize the LODF for larger interface networks without assuming very specific topological properties and thus to ensure localization of failures for such interface networks.



\bibliographystyle{IEEEtran}
\bibliography{biblio}

\begin{thebibliography}{10}
\providecommand{\url}[1]{#1}
\csname url@samestyle\endcsname
\providecommand{\newblock}{\relax}
\providecommand{\bibinfo}[2]{#2}
\providecommand{\BIBentrySTDinterwordspacing}{\spaceskip=0pt\relax}
\providecommand{\BIBentryALTinterwordstretchfactor}{4}
\providecommand{\BIBentryALTinterwordspacing}{\spaceskip=\fontdimen2\font plus
\BIBentryALTinterwordstretchfactor\fontdimen3\font minus
  \fontdimen4\font\relax}
\providecommand{\BIBforeignlanguage}[2]{{%
\expandafter\ifx\csname l@#1\endcsname\relax
\typeout{** WARNING: IEEEtran.bst: No hyphenation pattern has been}%
\typeout{** loaded for the language `#1'. Using the pattern for}%
\typeout{** the default language instead.}%
\else
\language=\csname l@#1\endcsname
\fi
#2}}
\providecommand{\BIBdecl}{\relax}
\BIBdecl

\bibitem{hines2017cascading}
P.~Hines, I.~Dobson, and P.~Rezaei, ``Cascading power outages propagate locally
  in an influence graph that is not the actual grid topology,'' \emph{{IEEE}
  Transactions on Power Systems}, vol.~32, no.~2, pp. 958--967, 2017.

\bibitem{guo2017critical}
H.~Guo, C.~Zheng, H.~H.-C. Iu, and T.~Fernando, ``A critical review of
  cascading failure analysis and modeling of power system,'' \emph{Renewable
  and Sustainable Energy Reviews}, vol.~80, pp. 9--22, 2017.

\bibitem{kinney2005modeling}
R.~Kinney, P.~Crucitti, R.~Albert, and V.~Latora, ``Modeling cascading failures
  in the north american power grid,'' \emph{The European Physical Journal
  B-Condensed Matter and Complex Systems}, vol.~46, no.~1, pp. 101--107, 2005.

\bibitem{faranda2007load}
R.~Faranda, A.~Pievatolo, and E.~Tironi, ``Load shedding: A new proposal,''
  \emph{IEEE Transactions on Power Systems}, vol.~22, no.~4, pp. 2086--2093,
  2007.

\bibitem{saffarian2010enhancement}
A.~Saffarian and M.~Sanaye-Pasand, ``Enhancement of power system stability
  using adaptive combinational load shedding methods,'' \emph{IEEE Transactions
  on Power Systems}, vol.~26, no.~3, pp. 1010--1020, 2010.

\bibitem{ding2012two}
L.~Ding, F.~M. Gonzalez-Longatt, P.~Wall, and V.~Terzija, ``Two-step spectral
  clustering controlled islanding algorithm,'' \emph{IEEE Transactions on Power
  Systems}, vol.~28, no.~1, pp. 75--84, 2012.

\bibitem{esmaeilian2016prevention}
A.~Esmaeilian and M.~Kezunovic, ``Prevention of power grid blackouts using
  intentional islanding scheme,'' \emph{IEEE transactions on industry
  applications}, vol.~53, no.~1, pp. 622--629, 2016.

\bibitem{cetinay2016topological}
H.~Cetinay, F.~A. Kuipers, and P.~Van~Mieghem, ``A topological investigation of
  power flow,'' \emph{IEEE Systems Journal}, vol.~12, no.~3, pp. 2524--2532,
  2016.

\bibitem{kaiser2021topological}
F.~Kaiser and D.~Witthaut, ``Topological theory of resilience and failure
  spreading in flow networks,'' \emph{Physical Review Research}, vol.~3, no.~2,
  p. 023161, 2021.

\bibitem{guo2021line1}
L.~Guo, C.~Liang, A.~Zocca, S.~H. Low, and A.~Wierman, ``Line failure
  localization of power networks part i: Non-cut outages,'' \emph{IEEE
  Transactions on Power Systems}, vol.~36, no.~5, pp. 4140--4151, 2021.

\bibitem{guo2018failure}
L.~{Guo}, C.~{Liang}, A.~{Zocca}, S.~H. {Low}, and A.~{Wierman}, ``Failure
  localization in power systems via tree partitions,'' in \emph{2018 IEEE
  Conference on Decision and Control (CDC)}, Dec 2018, pp. 6832--6839.

\bibitem{zocca2021}
A.~Zocca, C.~Liang, L.~Guo, S.~H. Low, and A.~Wierman, ``A {{Spectral
  Representation}} of {{Power Systems}} with {{Applications}} to {{Adaptive
  Grid Partitioning}} and {{Cascading Failure Localization}},''
  \emph{arXiv:2105.05234}, 2021.

\bibitem{lan2021a}
L.~Lan and A.~Zocca, ``Refining bridge-block decompositions through two-stage
  and recursive tree partitioning,'' \emph{Submitted to PSCC 2022}, 2021.

\bibitem{soltan2015analysis}
S.~Soltan, D.~Mazauric, and G.~Zussman, ``Analysis of failures in power
  grids,'' \emph{{IEEE} Transactions on Control of Network Systems}, vol.~4,
  no.~2, pp. 288--300, Jun. 2017.

\bibitem{strake2019non}
J.~Strake, F.~Kaiser, F.~Basiri, H.~Ronellenfitsch, and D.~Witthaut,
  ``Non-local impact of link failures in linear flow networks,'' \emph{New
  journal of physics}, vol.~21, no.~5, p. 053009, 2019.

\bibitem{guo2017monotonicity}
L.~{Guo}, C.~{Liang}, and S.~H. {Low}, ``Monotonicity properties and spectral
  characterization of power redistribution in cascading failures,'' in
  \emph{Allerton Conference}, Oct 2017, pp. 918--925.

\bibitem{bialek2021tree}
J.~W. Bialek and V.~Vahidinasab, ``Tree-partitioning as an emergency measure to
  contain cascading line failures,'' \emph{IEEE Transactions on Power Systems},
  2021.

\bibitem{kaiser2021network}
F.~Kaiser, V.~Latora, and D.~Witthaut, ``Network isolators inhibit failure
  spreading in complex networks,'' \emph{Nature communications}, vol.~12,
  no.~1, pp. 1--9, 2021.

\bibitem{stott2009dc}
B.~Stott, J.~Jardim, and O.~Alsa{\c{c}}, ``Dc power flow revisited,''
  \emph{IEEE Transactions on Power Systems}, vol.~24, no.~3, pp. 1290--1300,
  2009.

\bibitem{guler2007generalized}
T.~Guler, G.~Gross, and M.~Liu, ``Generalized line outage distribution
  factors,'' \emph{IEEE Transactions on Power systems}, vol.~22, no.~2, pp.
  879--881, 2007.

\bibitem{guo2009direct}
J.~Guo, Y.~Fu, Z.~Li, and M.~Shahidehpour, ``Direct calculation of line outage
  distribution factors,'' \emph{IEEE Transactions on Power Systems}, vol.~24,
  no.~3, pp. 1633--1634, 2009.

\bibitem{wood2013power}
A.~J. Wood, B.~F. Wollenberg, and G.~B. Shebl{\'e}, \emph{Power generation,
  operation, and control}.\hskip 1em plus 0.5em minus 0.4em\relax John Wiley \&
  Sons, 2013.

\bibitem{ghosh2008minimizing}
A.~Ghosh, S.~Boyd, and A.~Saberi, ``Minimizing effective resistance of a
  graph,'' \emph{SIAM review}, vol.~50, no.~1, pp. 37--66, 2008.

\bibitem{dorfler2013kron}
F.~Dorfler and F.~Bullo, ``Kron reduction of graphs with applications to
  electrical networks,'' \emph{IEEE Transactions on Circuits and Systems I:
  Regular Papers}, vol.~60, no.~1, pp. 150--163, Jan 2013.

\bibitem{babaeinejadsarookolaee2019power}
S.~Babaeinejadsarookolaee, A.~Birchfield, R.~D. Christie, C.~Coffrin,
  C.~DeMarco, R.~Diao, M.~Ferris, S.~Fliscounakis, S.~Greene, R.~Huang
  \emph{et~al.}, ``The power grid library for benchmarking ac optimal power
  flow algorithms,'' \emph{arXiv preprint arXiv:1908.02788}, 2019.

\end{thebibliography}

\end{document}